\newcommand\hl[1]{%
  \bgroup
  \hskip0pt\color{red}%
  #1%
  \egroup
}
\renewcommand{\hl}{}
\newcommand{\R}{\mathbb{R}}   % real numbers  
\renewcommand{\P}{\mathbf{P}}   % probability
\newcommand{\E}{\mathbf{E}}   % expectation
\newcommand{\var}{\operatorname{Var}}   % variance
\newcommand{\cov}{\operatorname{Cov}}   % covariance
\newcommand{\cY}{\mathcal{Y}}
\newcommand{\cN}{\mathcal{N}}
\newcommand{\cW}{\mathcal{W}}
\newcommand{\one}{\mathbbm{1}}  % indicator variable
\newcommand{\+}[1]{\mathbf{#1}}
\newcommand{\ep}{\varepsilon}
\newcommand{\on}[1]{\operatorname{#1}}
\newcommand{\pto}{\stackrel{p}{\to}}
\newcommand{\wto}{\Rightarrow}
\newcommand{\iid}{\stackrel{\text{iid}}{\sim}} % independent and identically distributed
\theoremstyle{plain}
\newtheorem{theorem}{Theorem}
\newtheorem{corollary}{Corollary}
\newtheorem{lemma}{Lemma}
\newtheorem{proposition}{Proposition}
\newtheorem{assumption}{Assumption}
\newtheorem{model}{Model}
\theoremstyle{definition}
\newenvironment{subtheorem}[1]{%
  \def\subtheoremcounter{#1}%
  \refstepcounter{#1}%
  \protected@edef\theparentnumber{\csname the#1\endcsname}%
  \setcounter{parentnumber}{\value{#1}}%
  \setcounter{#1}{0}%
  \expandafter\def\csname the#1\endcsname{\theparentnumber(\alph{#1})}%
  \ignorespaces
}{%
  \setcounter{\subtheoremcounter}{\value{parentnumber}}%
  \ignorespacesafterend
}
\newcounter{parentnumber}
\newcommand{\indep}{\perp\!\!\!\perp}
\newcommand{\HT}{\hat \tau_{\text{HT}}}
\newcommand{\hajek}{\hat \tau_{\text{H\'ajek}}}
\newcommand{\DM}{\hat \tau_{\text{DM}}}
\newcommand{\sumin}{\sum_{i=1}^n} 	%sum from i=1 to n
\newcommand{\avgin}{\frac{1}{n}\sumin}
\newcommand{\limn}{\lim_{n\to\infty}}
\newcommand{\Wni}{\+W_{-i}}
\title{
	Regression adjustments for estimating the global treatment effect in experiments with interference\thanks{The author thanks Fredrik S\"avje, Johan Ugander, and seminar participants at Facebook, Stanford University, and Yale University %, and the Conference on Digital Experimentation.  %, as well as the associate editor and anonymous reviewers
  for helpful comments and suggestions.  
  This work was supported in part by NSF grant IIS-1657104.} %\NFD
}
\author{
	Alex Chin\thanks{Department of Statistics, Stanford University, Stanford, CA, 94305 USA (\texttt{ajchin@stanford.edu})}
}
\date{This version: \today}
\begin{document}
\maketitle

\begin{abstract}
Standard estimators of the global average treatment effect can be biased in the presence of interference.  This paper proposes regression adjustment estimators for removing bias due to interference in Bernoulli randomized experiments.  We use a fitted model to predict the counterfactual outcomes of global control and global treatment.  Our work differs from standard regression adjustments in that the adjustment variables are constructed from functions of the treatment assignment vector, and that we allow the researcher to use a collection of any functions correlated with the response, turning the problem of detecting interference into a feature engineering problem.  We characterize the distribution of the proposed estimator in a linear model setting and connect the results to the standard theory of regression adjustments under SUTVA.  We then propose an estimator that allows for flexible machine learning estimators to be used for fitting a nonlinear interference functional form.  We propose conducting statistical inference via bootstrap and resampling methods, which allow us to sidestep the complicated dependences implied by interference and instead rely on empirical covariance structures.  Such variance estimation relies on an exogeneity assumption akin to the standard unconfoundedness assumption invoked in observational studies.  In simulation experiments, our methods are better at debiasing estimates than existing inverse propensity weighted estimators based on neighborhood exposure modeling.  We use our method to reanalyze an experiment concerning weather insurance adoption conducted on a collection of villages in rural China.
% How to write an abstract: https://cbs.umn.edu/sites/cbs.umn.edu/files/public/downloads/Annotated_Nature_abstract.pdf
\\[\baselineskip]
\noindent\textbf{Keywords:} causal inference, peer effects, SUTVA, A/B testing, exposure models, off-policy evaluation
\end{abstract}

\section{Introduction}

The goal in a randomized experiment is often to estimate the \emph{total} or \emph{global average treatment effect} (GATE) of a binary \hl{treatment} variable on a response variable.  The GATE is the difference in average outcomes when all units are exposed to treatment versus when all units are exposed to control. 
Under the standard assumption that units do not interfere with each other~\citep{cox1958planning}, which forms a key part of the \emph{stable unit treatment value assumption} (SUTVA)~\citep{rubin1974estimating,rubin1980randomization}, the global average treatment effect reduces to the standard average treatment effect.

%\note{Describe explicitly what causes the bias, perhaps with an example.}

However, in many social, medical, and online settings the no-interference assumption may fail to hold~\citep{rosenbaum2007interference,walker2014design,aral2016networked,taylor2017randomized}.  In such settings, peer and spillover effects can bias estimates of the global average treatment effect.  In the past decade, there has been a flurry of literature proposing methods for handling interference, mostly focusing on cases in which structural assumptions about the nature of interference are known.  For example, if there is a natural grouping structure to the data, such as households or schools or classrooms, it may be reasonable to assume that interference exists within but not across groups.  Versions of this assumption are known as \emph{partial} or \emph{stratified interference}~\citep{hudgens2008toward}.  In this case two-stage randomized designs can be used to decompose direct and indirect effects, which is an approach studied by~\citet{vanderweele2011effect,tchetgen2012causal,liu2014large,baird2016optimal,basse2017exact}, among others.  \citet{baird2016optimal} study how two-stage, random saturation designs can be used to estimate dose response curves under the stratified interference assumption.  \citet{basse2018analyzing} study two-stage experiments in which households with multiple students are assigned to treatment or control.  Other works that propose methods of handling interference include~\cite{ogburn2014causal}, which maps out causal diagrams for interference; \citet{van2014causal}, which studies a \hl{targeted maximum likelihood estimator for the case where network connections and treatments possibly change over time}; \citet{choi2017estimation}, which shows how confidence intervals can be constructed in the presence of monotone treatment effects; and~\citet{jagadeesan2017designs}, which studies designs for estimating the direct effect that strive to balance the network degrees of treated and control units.

The \emph{modus operandi} for general or arbitrary interference is the method of \emph{exposure modeling}, in which the researcher defines equivalence classes of treatments that inform the interference pattern.  \citet{aronow2017estimating} develop a general framework for analyzing inverse propensity weighted (Horvitz-Thompson- and H\'ajek-style) estimators under correct specification of local exposure models.  The exposure model often  used is some version of an  assumption that the potential outcomes of unit $i$ are constant conditional on all treatments in a local neighborhood of $i$, or that the potential outcomes are a monotone function of such treatments.  This assumption, known as \emph{neighborhood treatment response} (NTR), is a generalization of partial and stratified interference to the general network setting~\citep{manski2013identification}.  Methods for handling interference often rely on neighborhood treatment response as a core assumption.  For example, \citet{sussman2017elements} develop unbiased estimators for various parametric models of interference that are all restrictions on the NTR condition, and~\citet{forastiere2016identification} propose propensity score estimators for observational studies using the NTR assumption.

\citet{aronow2017estimating} use their methods to analyze the results of a field experiment on an anti-conflict program in middle schools in New Jersey.  By defining appropriate exposure models, they are able to estimate a direct effect (the effect of receiving the anti-conflict intervention), a spillover effect (the effect of being friends with some students who received the anti-conflict intervention), and a school effect (the effect of attending a school in which some students received the anti-conflict intervention).  The network structure consists of 56 disjoint social networks (schools), comprising 24,191 students in the original~\citet{paluck2016changing} study and a subset of 2,050 students studied in the~\citet{aronow2017estimating} analysis.  There are a number of similar studies in which the target of scientific inquiry is the quantification of peer or spillover effects and where the dataset permits doing so by being comprised of ``many sparse networks.''  Studies which consist of randomized experiments on such social networks include \citet{banerjee2013diffusion}, which studies a microfinance loan program in villages in India; \citet{cai2015social}, which studies a weather insurance program for farmers in rural China; \citet{kim2015social}, which concerns public health interventions such as water purification and microvitamin tablets in villages in Honduras; and~\citet{beaman2018can}, which explores social diffusion of a new agricultural technology among farmers in Malawi.  (Some studies thereof do not explicitly aim to understand spillover effects---for example~\citet{kim2015social} and ~\citet{beaman2018can} are concerned primarily with strategies for targeting influential individuals---but the presence of such effects is still crucial for their purposes.)  In these settings, exposure modeling may be (and has been) a successful way of decomposing direct and spillover effects.

\paragraph{\hl{The difficulties of using exposure models for global effects}}

How should one proceed if the goal is estimation of the global treatment effect rather than a decomposition into direct and spillover effects?  In this setting interference is a nuisance, not an object of intrinsic scientific interest.  \hl{Unbiased estimation would result from using an exposure model that accurately represents the true data-generating process.  However, the complicated nature of social interactions makes it difficult to select an exposure model that is both tractable and well-specified.} \citet{eckles2017design} discuss some of the difficulties of working in this setting \hl{in the context of  ``implausibility of tractable treatment response assumptions'':}
\begin{quote}
It is unclear how substantive judgment can directly inform the selection of an exposure model for interference in networks---at least when the vast majority of vertices are in a single connected component.  Interference is often expected because of social interactions (i.e., peer effects) where vertices respond to their neighbors' behaviors: in discrete time, the behavior of a vertex at $t$ is affected by the behavior of its neighbors at $t - 1$; if this is the case, then the behavior of a vertex at $t$ would also be affected by the behavior of its neighbors' neighbors at $t - 2$, and so forth.  Such a process will result in violations of the NTR assumption, and many other assumptions that would make analysis tractable.
\end{quote}

In this setting, one primary tool that has developed in the literature is the method of \emph{graph cluster randomization}~\citep{ugander2013graph}, where researchers use a clustered design in which the clusters are selected according to the structure of the graph in order to lower the variance of NTR-based inverse propensity estimators.  \citet{eckles2017design} provide theoretical results and simulation experiments to show how clustered designs can reduce bias due to interference.  Clusters can be obtained using algorithms developed in the graph partitioning and community detection literature~\citep{fortunato2010community,ugander2013balanced}.

While the graph clustering approach can be effective at removing some bias, the structure of real-world empirical networks may make it difficult to obtain satisfactory bias reduction via clustering, which relies on having good quality graph cuts.  The ``six degrees of separation'' phenomenon is well-documented in large social networks~\citep{ugander2011anatomy,backstrom2012four}, and the average distance between two Facebook users in February 2016 was just 3.5~\citep{bhagat16}.  Furthermore, most users belong to one large connected component and are unlikely to separate cleanly into evenly-sized clusters.  In a graph clustered experiment run at LinkedIn, the optimal clustering strategy used maintained only 35.59\% of edges between nodes of the same cluster~\citep[Table 1]{saveski2017detecting}, suggesting that bias remains even after clustering.  Figure~\ref{fig:schematic} provides an example illustration of how the structure of the network can markedly affect how much we might expect cluster randomization to help.

% \note{- p3 For your stat about LinkedIn being hard to cut, you might also be able to glean something from this WWW15 paper (which you should probably cite somewhere, and Dean and I are pretty meh on):
% \url{http://web.engr.illinois.edu/~hanj/pdf/www15_hgui.pdf}
% As I recall they actually try to do some sort of regression in exposure model space, but don't do a good job. It's been ages since I read it. Would be great if you could cite it with a concrete criticism.}

\begin{figure}[ht]
\centering
\includegraphics[width=0.9\textwidth]{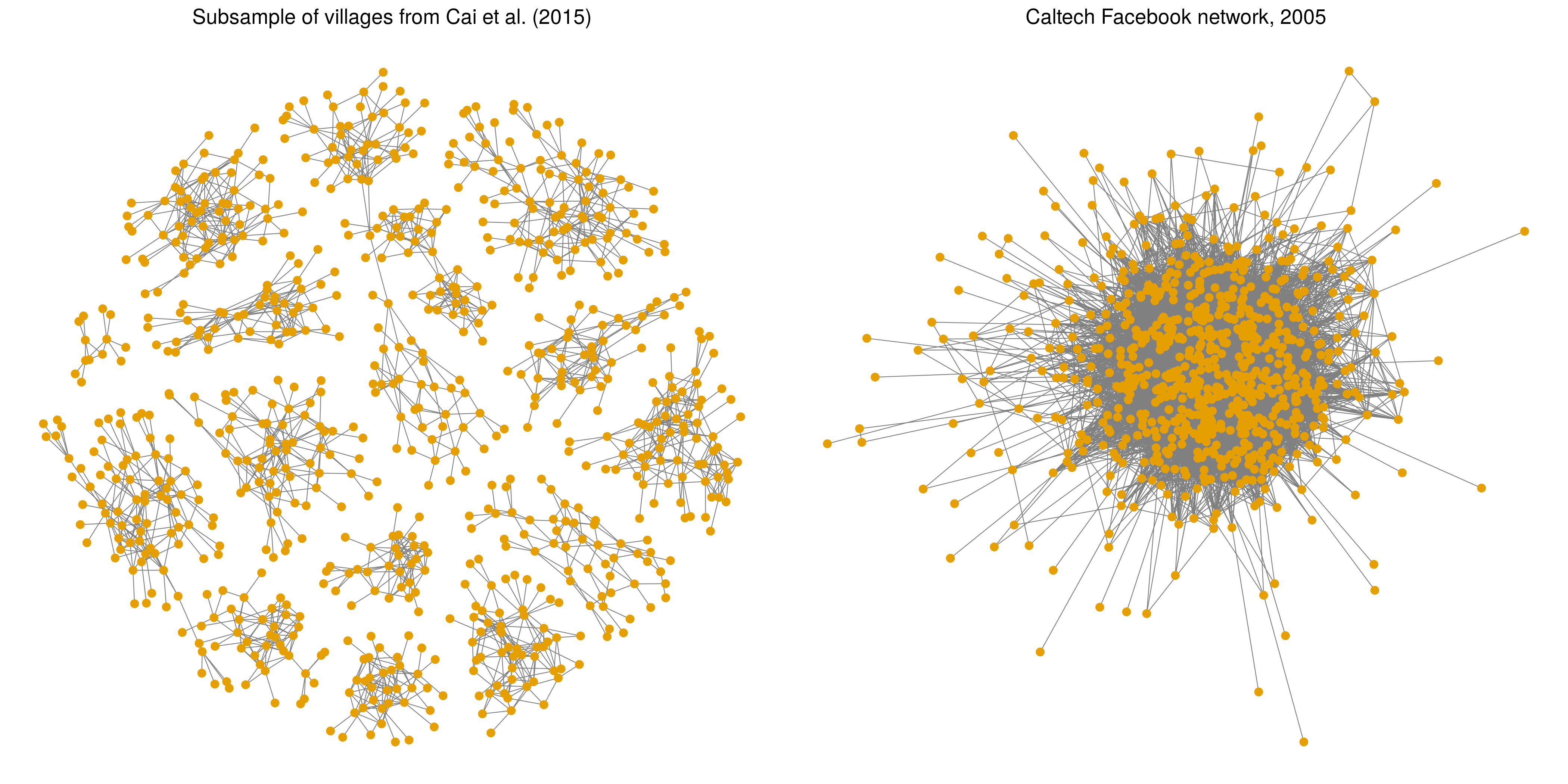}
\caption{(left) A subset of 16 nearly-disjoint Chinese villages, comprising 822 nodes, from an experiment regarding weather insurance adoption conducted by~\citet{cai2015social}. The setup of many, sparse networks is similar to that in the anti-conflict school dataset from~\citet{paluck2016changing}.  (right) The largest connected component of the Caltech Facebook network, with 762 nodes, from a single day snapshot in September 2005, taken from the \texttt{facebook100} dataset~\citep{traud2011comparing,traud2012social}.  Networks were plotted with the \texttt{ggnet2} function~\citep{tyner2017network} in the \texttt{GGally} package, using the default Fruchterman-Reingold force-directed layout~\citep{fruchterman1991graph}. 
We should not be surprised if methods for handling interference that might work well in the collection of networks on the left, such as exposure modeling and graph clustering, do not work so well in the network on the right.}
\label{fig:schematic}
\end{figure}

Such experimental designs also face practical hurdles.  Though cluster randomized controlled trials are commonly used in science and medicine, existing experimentation platform infrastructure in some organizations may only exist for standard (i.i.d.) randomized experiments, in which case adapting the design and analysis pipelines for graph cluster randomization would require significant ad hoc engineering effort.  In regimes of only mild interference, it may simply not be worth the trouble to run a clustered or two-stage experiment, especially if there is no way to know \emph{a priori} how much bias from interference will be present.  Instead, the practitioner would prefer to have a data-adaptive debiasing mechanism that can be applied to an experiment that has already been run.  Ideally, such estimators provide robustness to deviations from SUTVA yet do not sacrifice too much in precision loss if it turns out interference was weak or non-existent.

\paragraph{\hl{Towards an agnostic regression approach}}
\hl{We can take advantage of the fact that the global treatment effect estimand, as opposed to a peer or spillover effect estimand, can be defined \emph{agnostically} without regard to any exposure model.  The exposure model used, therefore, matters only insofar as it informs the corresponding estimator used.  An appropriate exposure model is one that leads to estimates of the global treatment effect that are approximately unbiased, \emph{even if it is not the exposure model corresponding to the true data-generating process}.  This agnostic perspective gives us hope because of the decoupling between data generation and estimation: We can believe in a complex interference pattern without having to use the corresponding intractable exposure model for estimation. }

Our approach is motivated by the rich literature on regression adjustment estimators in the non-interference setting.  In randomized controlled trials, regression adjustments are used to adjust for imbalances due to randomized assignment of the empirical covariate distributions of different treatment groups, and thus improve precision of treatment effect estimators.  In the observational studies setting, regression adjustments are used to adjust for inherent differences between the covariate distributions of different treatment groups.  We heavily borrow tools from that literature, both in the classical regime of using low-dimensional, linear regression estimators~\citep{freedman2008regressionA,freedman2008regressionB,lin2013agnostic,berk2013covariance} and more recent advancements that can utilize high-dimensional regression and machine learning techniques~\citep{bloniarz2016lasso,wager2016high,wu2017loop,athey2017approximate,chernozhukov2018double}.  \hl{This recent literature adopts the agnostic perspective that properties of least squares and machine learning estimators can be utilized without assuming the parametric model itself.}

This paper contains two main contributions: (a) a regression adjustment strategy for debiasing global treatment effect estimators, and (b) a class of bootstrapping and resampling methods for constructing variance estimates of such estimators.  We explore how well the analysis side of an experiment can be improved in independently-assigned (non-clustered) experiments.  Our approach can be loosely motivated by the \emph{linear-in-means} (LIM) family of models from the econometrics literature~\citep{manski1993identification}.  \hl{In a simple version of this model, an individual's outcome is said to depend on the average of her peer's exogenous features.  If this is true, then the peer average feature ``statistic'' can be adjusted for when estimating the global treatment effect, even if the linear-in-means model itself does not hold.  We note that much of the linear-in-means literature focuses on the identifiability of various peer effect parameters within the LIM model~\citep{bramoulle2009identification}; our goal instead is estimation of the agnostic global effect.}
%A vein of econometrics research stemming from~\citet{manski1993identification}, separate but related to the interference literature, concerns the identification of \emph{endogenous social effects}.  In linear-in-means models, the behaviors of individuals are correlated with the average endogenous and exogenous behaviors of their peers. \citet{manski1993identification}distingishes between exogenous effects, in which the ego’s outcome is influenced by peer characteristics, and endogenous effects, in which the ego’s outcome is influenced by peer outcomes.  This literature generally focuses on the identifiability of various peer effect parameters~\citep{bramoulle2009identification}.  For our purposes, the key feature of such a family of models with respect to estimating the global treatment effect is that in the reduced form, the outcome can be seen to depend on only a low-dimensional number of functions of peer exogenous characteristics, which can be adjusted for in order to estimate the global treatment effect.

Generally, our strategy is to learn a statistical model that captures the relationship between the outcomes and a set of unit-level statistics constructed from the treatment vector and the observed network.  These statistics can be viewed as \hl{\emph{features} or \emph{adjustment variables}}, and are to be constructed by the practitioner using domain knowledge.  The model is then used to predict the unobserved potential outcomes of each unit under the counterfactual scenarios if the unit had been assigned to global treatment, and global control.   The approach is thus reminiscent of regression adjustment estimators and off-policy evaluation.  Figure~\ref{fig:cf-distns} demonstrates how \hl{feature} distributions differ between the observed design distribution and the unobserved global counterfactual distributions of interest.

\hl{In Section~\ref{sec:linear} we present estimators in the context of a generative linear model and in Section~\ref{sec:nonparametric} we discuss the non-linear analog.  Even though the results in this paper are presented within the context of a generative model, they still make progress towards a fully agnostic solution.  First, the models considered here are considerably more flexible, and more easily extended, than exposure models (which are also assumed to be generative).  Second, the assumptions on the errors can be relaxed and we show via simulation experiments (Section~\ref{sec:simulations}) that these linear methods can work well in more general contexts.  The non-linear context, which allows the use of arbitrary machine learning estimators, also moves closer to a fully agnostic approach by allowing a nonparametric generative model.  Third, by connecting these results to analogous ones in the SUTVA case we lay the foundation for how to think about an agnostic approach, which is not possible using pure exposure modeling methods.  Indeed, agnostic perspectives have emerged only recently even in the SUTVA setting~\citep{freedman2008regressionA,freedman2008regressionB,lin2013agnostic}. This paper, therefore, can be viewed as a conceptual stepping stone between existing methods that assume exposure models are generative, and future work that would establish a fully agnostic presentation.}

\begin{figure}[t]
\centering
\includegraphics[width=0.9\textwidth]{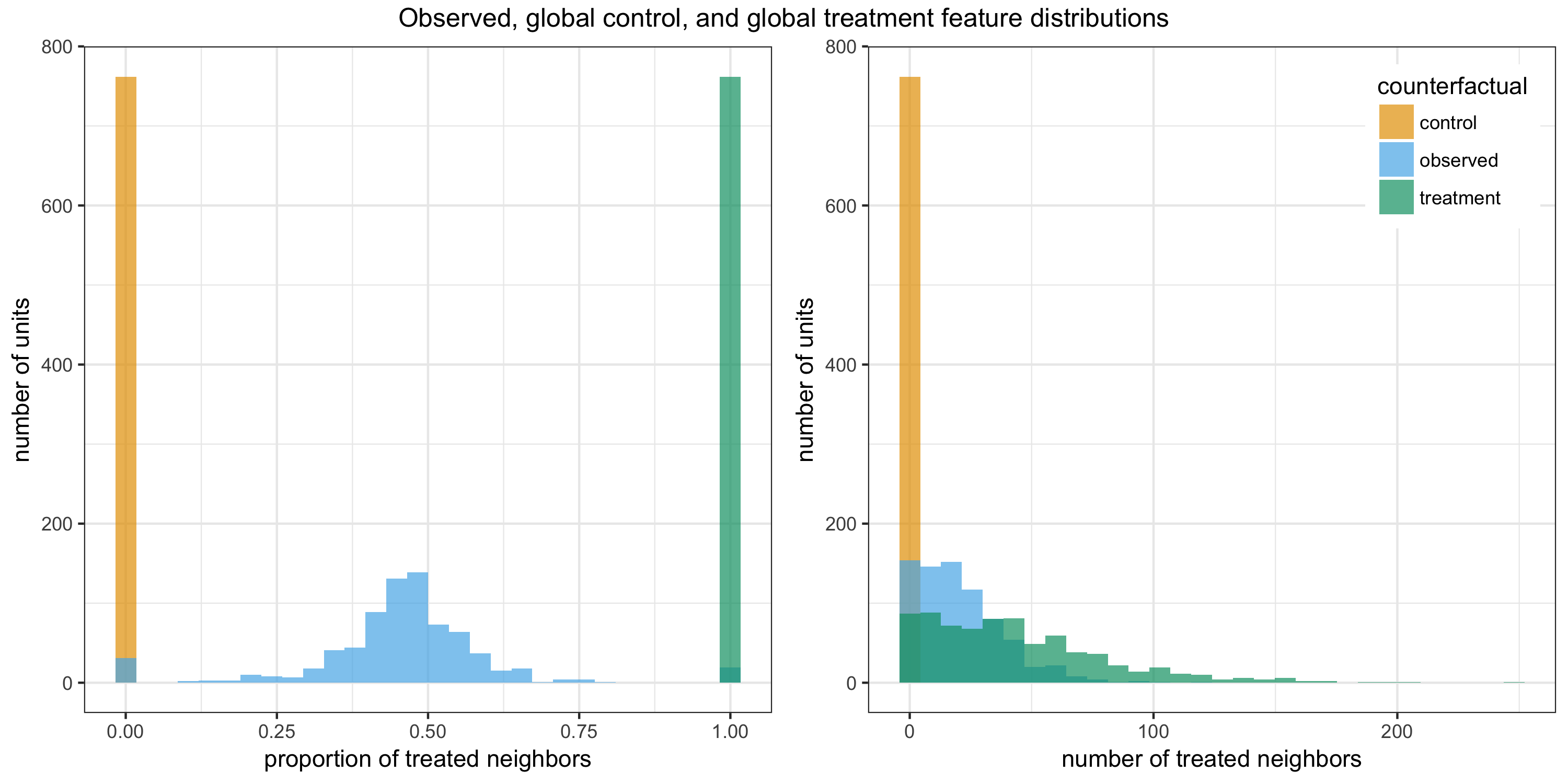}
\caption{(left) Distributions for fraction of treated neighbors $d_i^{-1} \sum_{j \in \cN_i} W_j$.  (right) Distributions for number of treated neighbors $\sum_{j \in \cN_i} W_j$.  Feature distributions are under global exposure to control $\+W = \+0$ (orange), global exposure to treatment $\+W = \+1$ (green), and a single observed treatment instance from an iid Bernoulli$(0.5)$ distribution (blue).  Network is the Caltech social graph from the \texttt{facebook100} dataset~\citep{traud2011comparing,traud2012social}.  If the response is correlated with one or both of these features, then ideas from off-policy evaluation of the counterfactual outcomes can guide estimation of the global treatment effect.  Even if the distributions are quite different, as in the left hand picture, if the response can be modeled by low dimensional model then extrapolation may not be too unreasonable.}
\label{fig:cf-distns}
\end{figure}

\hl{It is shown in this paper that an assumption of exogeneity is required, even though the treatment is randomized.  Such an assumption can be likened to an \emph{unconfoundedness}, \emph{ignorability}, or \emph{selection on observables} assumption.  A curious feature of randomized experiments under interference, then, is that they display characteristics of observational studies as well.  It is helpful to think of estimators used in SUTVA observational studies that require the estimation of both a \emph{propensity model} and a \emph{response model}.  (Doubly-robust estimators allow misspecification of one but not both of these models.)}  In a randomized experiment under interference the propensity model is fully known and does not need to be estimated; however, the response can be affected by confounding variables.  In randomized experiments under interference, then, researchers must be wary of the same challenges that beset drawing causal conclusions from observational datasets, even though the treatments were assigned randomly.  The exogeneity assumption is not generally verifiable from the data but is necessary in order to make any progress.  Ideally, one has access to methods for conducting sensitivity analyses for interference, but such methods are in their infancy and we refrain from addressing this issue here.

Our estimators have several advantages over existing exposure modeling estimators.  The correct specification of an exposure model is also a form of exogeneity assumption, yet our approach admits much more flexible forms of interference.  It can handle multiple types of graph features, which do not even have to be constructed from the same network.  Adjusting for interference becomes a feature engineering problem in which the practitioner is free to use his or her domain knowledge to construct appropriate features.  If a feature turns out to be noninformative for interference, no additional bias is incurred (though a penalty in variance may be paid).  Our adjustment framework also reduces to the standard, SUTVA regression adjustment setup in the event that static, \hl{baseline characteristics} are used.

Finally, we propose methods for quantifying the variance of the proposed estimators.  Variance estimation in the presence of interference is generally difficult because of the complicated dependencies created by the propagation of interference over the network structure.  Confidence intervals based on asymptotic approximations may not be reliable since the dependencies can drastically reduce the effective sample size.  For example, the variance of the sample mean of the \hl{features} may not even scale at a $n^{-1}$ rate, where $n$ is the sample size.  In this paper we propose a novel way of taking advantage of the randomization distribution to produce bootstrap standard errors, assuming unconfoundedness.  Since the \hl{features} are constructed by the researcher from the vector of treatments, and the distribution of treatments is known completely in a randomized experiment, we can calculate via Monte Carlo simulation the sampling distribution of any function of the design matrix under the randomization distribution.  This approach ensures that we properly represent all of the dependencies exhibited empirically by the data, and can then be used to construct standard errors. 

% \note{- The introduction is very well written, but it currently buries a main contribution in the second-to-last "Naturally, .." paragraph, about the variance estimation . These results should be previewed earlier. The abstract should also make some mention of the bootstrap/MC approach when it mentions "variance estimation"}

% \note{Discuss how the direct effect often dominates, and we just want to debias it a little.}
% When SUTVA is violated, the DM is biased.  \note{be clear what is DM -- if it's the correct exposure model you're fine but otherwise it sucks}

% Figure~\ref{fig:cf-distns}: It may be nervewracking to some that the counterfactual distributions are so different.  Three things: (a) if there is a low dimensional model we can just learn the parameters of the model and extrapolate, (b) we envision the direct effect may be much higher than the spillover effect so we just want to adjust them a little, (c) in Section 4 we develop nonlinear estimators that can place more weight on the tails.

% Our aim in this paper is not to discredit exposure model-based approaches which have been successfully used in certain settings but have their drawbacks in the use case that we envision.  The complicated nature of social behavior means that a wide range of tools are necessary to encapsulate the range of scenarios in which interference may arise.

The remainder of this paper is structured as follows.  In Section~\ref{sec:setup} we describe the problem and motivate our approach with an informal discussion of a linear-in-means model.  In Section~\ref{sec:linear} we develop the main results for linear regression estimators and in Section~\ref{sec:nonparametric} we show how to extend this to the non-linear setting.  In Section~\ref{sec:simulations} we conduct simulation experiments, in Section~\ref{sec:applications} we consider an application to an existing field experiment, and in Section~\ref{sec:discussion} we conclude.  All proofs are in the appendix.

\section{Setup and estimation in LIM models}
\label{sec:setup}
We work within the potential outcomes framework, or Rubin causal model~\citep{neyman1923application,rubin1974estimating}.  Consider a population of $n$ units indexed on the set $[n] = \{1, \dots, n\}$ and let $\+W = (W_1, \dots, W_n) \in \cW = \{0, 1\}^n$ be a random vector of binary treatments.  We will work only with treatments assigned according to a Bernoulli randomized experimental design:
\begin{assumption}
\label{asm:bernoulli-design}
$W_i \iid \on{Bernoulli}(\pi)$ for every unit $i \in [n]$, where $\pi \in (0, 1)$ is the treatment assignment probability.
\end{assumption}
The general spirit of our approach can likely be extended to more complicated designs, but our goal in this paper is to show that substantial analysis-side improvements can be made even under the simplest possible experimental design.

Suppose that each response lives in an outcome space $\cY$, and is determined by a mean function $\mu_i: \cW \to \cY$:  
\begin{equation}
\label{eqn:mean-function}
Y_i = Y_i(\+W) = \mu_i(\+W) + \ep_i
\end{equation}
In this section we limit ourselves to an informal discussion of point estimation and defer the question of variance estimation to a future section. The only assumption we require on the residuals, therefore, is an assumption of strict exogeneity:
\[\E[\ep_i | W_1, \dots, W_n] = 0.\]
In particular, no independence or other assumptions about the correlational structure of the residuals are made in this section, though such assumptions will be necessary for variance estimation, which we address in Section~\ref{sec:linear}.

Because the units are assumed to belong to a network structure, distinguishing between finite population and infinite superpopulation setups is not so straightforward.  In the SUTVA setting, good estimators for finite population estimands (or conditional average treatment effects) are usually good estimators for superpopulation estimands, and vice versa~\citep{imbens2004nonparametric}.  In order to simplify the analysis, we do not work with a fixed potential outcomes $Y_i(\+w)$ for $\+w \in \cW$, and allow the residuals $\ep_i$ to be random variables.  We therefore consider additional variation of the potential outcomes coming from repetitions of the experiment, but we do not consider the units to be sampled from a larger population.  We do this because it is easier to discuss the behavior of $\ep_i$ when they are random variables.  \hl{This perspective is related to the \emph{intrinsic nondeterminism} perspective discussed by \citet{pearl2009causality} on page 220, 
as well as the idea of \emph{stochastic counterfactuals} discussed previously in the literature~\citep{greenland1987interpretation,robins1989probability,robins2000causal,vanderweele2012stochastic}.}

In this paper we focus on estimation of the \emph{total} or \emph{global average treatment effect} (GATE), defined by
\begin{equation}
\label{eqn:tau-global}
\tau = \avgin [\E[Y_i(\+1)]  - \E[Y_i(\+0)]].
\end{equation}
This parameter is called a global treatment effect because is a contrast of average outcomes between the cases when the units are globally exposed to treatment ($\+W = \+1$) and globally exposed to control ($\+W = \+0$).

Under an assumption of strict exogeneity, in which $\E[\ep_i | \+W] = 0$, the treatment effect is the difference of average global exposure means
\[\tau = \avgin \left[\mu_i(\+1) - \mu_i(\+0)\right],\]
In order to proceed, we must make assumptions about the structure of the mean function $\mu_i$.

\subsection{A simple linear-in-means model}
To illustrate our approach we start with a simple model.  Let $G$ be a network with adjacency matrix $A$.  For simplicity in this paper we will mostly assume that $G$ is simple and undirected, but one can just as easily use a weighted and directed graph.  We emphasize that we assume $G$ is completely known to the researcher.  Let $\cN_i = \{j \in [n] : A_{ij} = 1\}$ be the neighborhood of unit $i$ and $d_i = |\cN_i|$ be the network degree of unit $i$.  Define
\begin{equation}
\label{eqn:frac-nbh}
X_i = \frac{1}{d_i} \sum_{j \in \cN_i} W_j,
\end{equation}
the fraction of neighbors of $i$ that are in the treatment group.  Then take the mean function $\mu_i$ in equation~\eqref{eqn:mean-function} to be as follows.
\begin{model}[Exogenous LIM model]
\[\mu_i(\+W) = \alpha + \gamma W_i + \delta X_i.\]
\label{model:lim-exo}
\end{model}
This model is a simple version of a linear-in-means model~\citep{manski1993identification}.  The model contains an intercept $\alpha$ as well as a direct effect $\gamma$, which captures the strength of individual $i$'s response to changes in its own treatment assignment.  Additionally, the response of unit $i$ is correlated with mean treatment assignment of its neighbors; \citet{manski1993identification} calls $\delta$ an \emph{exogenous social effect}, because it captures the correlation of unit $i$'s response with the exogenous characteristics of its neighbors.  The interactions are also assumed to be ``anonymous'' in that the unit $i$ responds only to the mean neighborhood treatment assignment and not the identities of those treated neighbors.  In this model, unit $i$ responds to its neighbors' treatments but not to its neighbors' outcomes.  \hl{Under Model~\ref{model:lim-exo}, the variable $X_i$ is the mechanism by which interference affects the outcome and thus can be viewed as playing a similar role as baseline characteristics or pretreatment covariates in an observational study.  However, in this paper we shall use the term \emph{statistic} or \emph{feature} rather than \emph{covariate} to refer to $X_i$, in order to remind the reader that $X_i$ does not represent a baseline characteristic.}

Now consider the estimand~\eqref{eqn:tau-global} under Model~\ref{model:lim-exo}.
If all units are globally exposed to treatment then it is the case for all units $i$ that $W_i = 1$ and $X_i = 1$.  Therefore
\[\avgin \mu_i{(\+1)} = \alpha + \gamma + \delta.\]
Similarly, if all units are globally exposed to control, then $W_i = 0$ and $X_i = 0$, and so
\[\avgin \mu_i{(\+0)} = \alpha.\]
Therefore, the treatment effect under Model~\ref{model:lim-exo} is simply
\[\tau = (\alpha + \gamma + \delta) - \alpha = \gamma + \delta.\]
This parametrization suggests that if we have access to unbiased estimators $\hat \gamma$ and $\hat \delta$ for $\gamma$ and $\delta$, then an unbiased estimate for $\tau$ is given by
\[\hat \tau = \hat \gamma + \hat \delta.\]
In particular, one is tempted to estimate $\gamma$ and $\delta$ with an OLS regression of $Y_i$ on $W_i$ and $X_i$.  Of course, using $\hat \tau$ as an estimator for $\tau$ only makes sense if Model~\ref{model:lim-exo} accurately represents the true data generating process.  We build up more flexible models in the following sections.

In contrast, we can easily see why the difference-in-means estimator,
defined for sample sizes $N_1 = \sumin W_i$ and $N_0 = \sumin (1 - W_i)$ as
\begin{equation}
\label{eqn:dm}
\DM = \frac{1}{N_1} \sumin W_iY_i - \frac{1}{N_0} \sumin (1 - W_i)Y_i,
\end{equation}
is biased under Model~\ref{model:lim-exo}.  The mean treated response is
\[\E[Y_i | W_i = 1] = \alpha + \gamma + \delta\E[X_i | W_i = 1] = \alpha + \gamma + \delta\E[X_i],\]
where $X_i$ is independent of $W_i$ since the treatments are assigned independently and there are no self-loops in $G$.  Similarly,
\[\E[Y_i | W_i = 0] = \alpha + \delta \E[X_i | W_i = 0] = \alpha + \delta \E[X_i].\]
Therefore, the difference-in-means estimator $\DM$ has expectation $\gamma$, which need not equal $\tau = \gamma + \delta$ in general.  Only if $\delta = 0$ do they coincide, in which case SUTVA holds and there is no interference.  In other words, the difference-in-means estimator marginalizes out the indirect effect rather than adjusting for it; it is an unbiased estimator not for the GATE but for the \emph{expected} average treatment effect (EATE), defined as
\[\avgin [\E[Y_i | W_i = 1] - \E[Y_i | W_i = 0]].\]
The EATE was introduced in~\citet{savje2017average} as a natural object of study for estimators which are designed for the SUTVA setting.   \citet{savje2017average,chin2018central} study the limiting behavior of estimators such as $\DM$ under mild regimes of misspecification of SUTVA due to interference.

\subsection{Linear-in-means with endogenous effects}
\label{sec:lim-endo}

Now we move to the more interesting version of the linear-in-means model, which contains an endogenous social effect in addition to an exogenous one.  Let
\begin{equation}
\label{eqn:endogenous-mean}
Z_i = \frac{1}{d_i} \sum_{j \in \cN_i} Y_j,
\end{equation}
the average value of the neighboring responses.  Now consider the following model:

\begin{subtheorem}{model}\label{model:lim-endo}
\begin{model}[LIM with endogenous social effect]
\label{model:lim-infinite}
\[\mu_i(\+W) = \alpha + \beta Z_i + \gamma W_i + \delta X_i.\]
\end{model}
In addition to direct and exogenous spillover effects, unit $i$ now depends on the outcomes of its neighbors through the spillover effect $\beta$.  It is conventional and reasonable to assume that $|\beta| < 1$.  Model~\ref{model:lim-infinite} is often more realistic than Model~\ref{model:lim-exo}; as discussed in the introduction, we often believe that interference is caused by individuals reacting to their peers' behaviors rather than to their peers' treatment assignments.  

It is helpful to write Model~\ref{model:lim-infinite} in vector-matrix form.  Let $\tilde G$ be the weighted graph defined by degree-normalizing the adjacency matrix of $G$; i.e., let $\tilde G$ be the graph corresponding to the adjacency matrix $\tilde A$ with entries $\tilde A_{ij} = d_i^{-1} A_{ij}$.  Then the matrix representation of Model~\ref{model:lim-infinite} is 
\begin{equation}
\label{eqn:lim-reduced}
Y = \alpha + \beta \tilde A Y + \gamma W + \delta \tilde A W + \ep,
\end{equation}
where $Y$, $W$, and $\ep$ are the $n$-vectors of responses, treatment assignments, and residuals, respectively.  
Using the matrix identity $(I - \beta \tilde A)^{-1} = \sum_{k=0}^\infty \beta^k \tilde A^k$, as in equation (6) of~\citet{bramoulle2009identification}, one obtains the reduced form 
\[Y = \frac{\alpha}{1 - \beta} + \gamma W + (\gamma \beta + \delta) \sum_{k=0}^\infty \beta^k \tilde A^{k+1} W + \sum_{k=0}^\infty \beta^k \tilde A^k \ep.\]
Unlike \citet{manski1993identification,bramoulle2009identification} and other works in the ``reflection problem'' literature, we are not concerned with the identification of the social effect parameters $\beta$ and $\delta$; these are only nuisance parameters toward the end of estimating $\tau$.  We do note, however, that conditions for identifiability are generally mild enough to be satisfied by real-world networks.  For example, \cite{bramoulle2009identification} show that the parameters in Model~\ref{model:lim-infinite} are identified whenever there exist a triple of individuals who are not all pairwise friends with each other; such a triple nearly certainly exists in any networks that we consider.

Now, let $X_{i,k}$ be the $i$-th coordinate of $\tilde A^k W$.  That is,
\begin{align*}
X_{i,1} &= \frac{1}{d_i} \sum_{j \in \cN_i} W_j \\
X_{i,2} &= \frac{1}{d_i} \sum_{j \in \cN_i} \frac{1}{d_j} \sum_{k \in \cN_j} W_k \\
X_{i,3} &= \frac{1}{d_i} \sum_{j \in \cN_i} \frac{1}{d_j} \sum_{k \in \cN_j} \frac{1}{d_k} \sum_{\ell \in \cN_k } W_\ell,
\end{align*}
and in general, for any $k \geq 1$, 
\[X_{i,k} = \frac{1}{d_i} \sum_{j_1 \in \cN_i} \frac{1}{d_{j_1}} \sum_{j_2 \in \cN_{j_1} }\dots \frac{1}{d_{j_k}} \sum_{j_{k-1} \in \cN_{j_{k-1}} } W_{j_k}.\]
Then Model~\ref{model:lim-infinite} is the same as
\begin{equation}
Y_i = \tilde \alpha + \tilde \gamma W_i +  \sum_{k=0}^\infty \tilde \beta_k X_{i,k} + \tilde \ep_i,
\label{eqn:lim-infinite}
\end{equation}
where we have reparametrized the coefficients as
\begin{align*}
\tilde \alpha &= \frac{\alpha}{1 - \beta} \\
\tilde \gamma &= \gamma \\
\tilde \beta_k &= (\gamma \beta + \delta) \beta^k \\
\tilde \ep &= \sum_{k=0}^\infty \beta^k A^k \ep.
\end{align*}
Notice that equation~\eqref{eqn:lim-infinite} respects exogeneity, as
\[\E[\tilde \ep | \+W] = \sum_{k=0}^\infty \beta^k A^k \E[\ep | \+W] = 0.\]
%\note{Comment on $W_i$ showing up in $X_i$.}
Each \hl{feature} $X_{i,k}$ represents the effect of treatments from units of graph distance $k$ on the response of unit $i$.  Since $|\beta| < 1$, the effects of the terms $\tilde \beta_k X_{i,k}$ do not contribute much to equation~\eqref{eqn:lim-infinite} when $k$ is large.  Therefore, for any finite integer $K$, we may consider approximating Model~\ref{model:lim-infinite} with a finite-dimensional model.
\begin{model}[Finite linear-in-means]
\label{model:lim-finite}
\begin{equation}
\label{eqn:lim-finite}
Y_i = \tilde \alpha + \tilde \gamma W_i + \sum_{k=0}^K \tilde \beta_k X_{i,k} + \tilde \ep_i,
\end{equation}
\end{model}
\end{subtheorem}
The approximation error is of order $\tilde \beta^{k+1} = (\gamma \beta + \delta)\beta^{K+1}$ (recall that $|\beta| < 1$).  Therefore, good estimates of the coefficients in equation~\eqref{eqn:lim-finite} should be good estimates of the coefficients in equation~\eqref{eqn:lim-infinite} as well.  Unless spillover effects are extremely large, the approximation may be quite good for even small values of $K$.  In fact, it may be reasonable to take equation~\eqref{eqn:lim-finite} rather than equation~\eqref{eqn:lim-infinite} as the truth where $K$ is no larger than the diameter of the network $G$, as spillovers for larger distances may not make sense.

As in Model~\ref{model:lim-exo}, we can consider the counterfactuals of interest.  If all units are globally exposed to treatment, then $W_i = 1$ and $X_{i,k} = 1$ for all $i$ and $k$.  Similarly, if all units are globally exposed to control, then $W_i = 0$ and $X_{i,k} = 0$ for all $i$ and $k$.  Therefore, by equation~\eqref{eqn:lim-infinite}, the estimand $\tau$ under Model~\ref{model:lim-infinite} is
\[\tau = \tilde \gamma + \sum_{k=0}^\infty \tilde \beta_k,\]
and under Model~\ref{model:lim-finite} it is
\[\tau = \tilde \gamma + \sum_{k=0}^K \tilde \beta_k.\]

Now, since Model~\ref{model:lim-finite} has only $K + 3$ coefficients, given $n > K + 3$ individuals one can estimate the coefficients using, say, ordinary least squares.  The treatment effect estimator
\[\hat \tau = \hat \gamma + \sum_{k=1}^K \hat \beta_k\]
is then unbiased for $\tau$ under Model~\ref{model:lim-finite} and ``approximately unbiased'' for $\tau$ under Model~\ref{model:lim-infinite}.  This discussion is of course quite informal, and we make more formal arguments in Section~\ref{sec:linear}.

One interpretation of the discussion in this section is that an endogeneous social effect in the linear-in-means model manifests as a propogation of exogenous effects through the social network, with the strength of the exogenous effect diminishing as the network distance increases.  Therefore, adjusting for the exogenous features within the first few neighborhoods is nearly equivalent to adjusting for the behavior implied by the endogenous social effect.

%As a final short remark, we note that the idea that the endogenous model is similar to the exogenous model for the purpose of dealing with spillover effects has been noted before.  \citet{athey2017exact} show that the score statistic for testing spillover effects is actually the same under the two models.

\subsection{Model assumptions and exposure models}
\hl{
The statements of the models discussed in this section couple together an interference mechanism restriction with a functional form assumption.  It is worth disentangling these assumptions and discussing why it may be sometimes advantageous for the analyst to consider them jointly.  First consider Model~\ref{model:lim-exo}.  It implies that the interference mechanism is restricted to influence from units only one step away in the graph, and furthermore, that this one-step influence is transmitted only through the statistic $X_i$.  This interference mechanism restriction can be framed in the language of \emph{constant treatment response} (CTR) mappings~\citep{manski2013identification}:
\begin{equation}
\mu_i(\+w) = \mu_i(\+w') \text{ for all } \+w, \+w' \in \cW \text{ such that } w_i = w_i', x_i = x_i'.
\label{eqn:ctr}
\end{equation}
The CTR statement~\eqref{eqn:ctr} is equivalent to specifying an exposure model that the potential outcomes depend only on $\+W$ through $W_i$ and $X_i$.  However, it makes no assumptions about the functional form of $\mu_i(\cdot)$, yet Model~\ref{model:lim-exo} goes further and makes a strong parametric functional form assumption about the response.  It is conceptually useful to recognize the different meanings and implications of these assumptions.

One tempting approach, then, might be for an analyst to first consider verifying whether the exposure model holds, using domain knowledge or otherwise.  The analyst then separately proceeds to consider appropriate functional forms (and perhaps only if nonparametric estimators exhibit low power).  This logic may succeed for simple exposures of the form implied by Model~\ref{model:lim-exo} but can lead to issues for more complex data-generating processes likely to be encountered in the real world.

This is made clear by the discussion of Models~\ref{model:lim-infinite} and~\ref{model:lim-finite}.  By rewriting Model~\ref{model:lim-infinite} as the infinite series given by equation~\eqref{eqn:lim-infinite}, we find that there is no data-reducing exposure model or CTR assumption that can handle such endogenous social effects!  This is discouraging unless the analyst jointly considers the parametric implications of an endogenous effect $|\beta| < 1$, which suggests a way forward via the finite approximation Model~\ref{model:lim-finite}.  Even if the linearity in equation~\eqref{eqn:lim-finite} is too strong, a natural relaxation might be a kind of generalized additive model of the form
\[Y_i = \alpha + \gamma W_i + \sum_{k=0}^K f(k) g(X_{i,k}) + \ep_i,\]
where $g$ is arbitrary but $f$ is restricted to be decreasing in $k$ in order to ensure that spillovers decrease in graph distance.

Furthermore, statisticians are well-versed in distinguishing and handling modeling violations of the mean function (here, corresponding to the interference function form) and the covariance function (corresponding to the interference restriction assumption), whereas statements like \eqref{eqn:ctr} may be a bit more abstruse for the practicing statistician.  The implications here are further clarified by the discussions in the following sections as well as the simulation examples provided in Section~\ref{sec:simulations}.
}

\section{Interference features and the general linear model}
\label{sec:linear}
In Section~\ref{sec:setup}, we showed that the mean function in the linear-in-means model is comprised of a linear combination of statistics $X_{i,k}$ which are constructed as functions of the treatment vector.  This fact suggests extending our approach to a linear model containing other functions of the treatment vector that are correlated with $Y_i$, not just the ones implied by the linear-in-means model.  We now formulate the general linear model.  We suppose that each unit $i$ is associated with a $p$-dimensional vector of \hl{\emph{interference features} or \emph{interference statistics}} $X_i \in \R^p$ that inform the pattern of interference for unit $i$.  We assume that the $X_i$ are low-dimensional ($p \ll n$).
Because $X_i$ is to be used for adjustment, the main requirement is that it not be a ``post-treatment variable''; that is, that it not be correlated with the treatment $W_i$.  Therefore, we require the following assumption:
\begin{assumption}
\label{asm:x-indep-w}
$X_i \indep W_i$ for all $i \in [n]$.
\end{assumption}
Let $\Wni$ denote the vector of \emph{indirect treatments}, which is the $n-1$ vector of all treatments except for $W_i$.  The key feature of our approach is that even though $X_i$ must be independent of $W_i$, \emph {it is not necessary that $X_i$ be independent of the vector of indirect treatments $\Wni$.}  In fact, in order for $X_i$ to be useful for adjusting for interference, we expect that $X_i$ will be correlated with some entries of $\Wni$.  In particular, $X_i$ may be a deterministic function $x_i(\cdot)$ of the indirect treatments,
\begin{equation}
\label{eqn:x-determ-fn}
X_i = x_i(\Wni).
\end{equation}
Adjusting for such a variable $X_i$ will not cause post-treatment adjustment bias as long as the entries of $\+W$ are independent of each other.  This holds automatically in a Bernoulli randomized design (Assumption~\ref{asm:bernoulli-design}).  

The \hl{features} $X_i$ may depend on static structural information about the units such as network information provided by $G$, though since $G$ is static we supress this dependence in the notation.  For example, $X_i$ defined as in equation~\eqref{eqn:frac-nbh}, which represents the proportion of treated neighbors, captures a particular form of exogenous social influence.  Provided there are no self-loops in $G$ so that $A_{ii} = 0$, $W_i$ does not appear on the right-hand side of equation~\eqref{eqn:frac-nbh} and so $X_i$ and $W_i$ are independent.

We assume that we can easily sample from the distribution of $X_i$.  In particular, if $X_i = x_i(\Wni)$, then the distribution of $X_i$ can be constructed by Monte Carlo sampling from the randomization distribution of the treatment $\+W$.  In this paper and in all the examples we use, we assume that $X_i$ is a function of $\Wni$ as in equation~\eqref{eqn:x-determ-fn}, so that conditioning on $\Wni$ removes all randomness in $X_i$.  But the generalization is easily handled.

In this section we assume that the response is linear in $X_i$; we address nonparametric response surfaces in Section~\ref{sec:nonparametric}.
\begin{model}[Linear model]
\label{model:linear}
Given $X_i$, let the response $Y_i$ follow
\[Y_i = W_i \mu^{(1)}(X_i) + (1 - W_i)\mu^{(0)}(X_i) + \ep_i,\]
where the conditional response surfaces
\[\mu^{(0)}(x) = \E[Y_i^{(0)} | X = x], \qquad \mu^{(1)}(x) = \E[Y_i^{(1)} | X = x]\]  satisfy
%\[\mu_i^{(0)} = \E[Y_i^{(0)} | X_i], \qquad \mu_i^{(1)} = \E[Y_i^{(1)} | X_i]\]
\[\mu^{(0)}(x) = \beta_0^\top x, \qquad \mu^{(1)}(x) = \beta_1^\top  x\]
for $x \in \R^p$ and $\beta_0, \beta_1 \in \R^p$. 
That is, they follow a ``separate slopes'' linear model in $X_i$.  We assume $p < n$. 
\end{model}

In the above parametrization, we assume that the first coordinate of each $X_i$ is set to $1$, so that the vectors $\beta_0$ and $\beta_1$ contain coefficients corresponding to the intercept as in the classical OLS formulation.

\subsection{Feature engineering}
Before considering assumptions on the residuals $\ep_i$, we pause here to emphasize the flexibility provided by modeling the interference pattern as in Model~\ref{model:linear}.  In this framework, the researcher can use domain knowledge to construct graph features that are expected to contribute to interference.  In essence, we have transformed the problem of determining the structure of the interference pattern into a feature engineering problem, which is perhaps a more intuitive and accessible task for the practitioner.

To elaborate, consider the problem of selecting an exposure model.  \citet{ugander2013graph} propose and study a number of different exposure models for targeting the global treatment effect, including \emph{fractional exposure} (based on the fraction of treated neighbors), \emph{absolute exposure} (based on the raw number of treated neighbors), and extensions based on the $k$-core structure of the network.  In reality, it may be the case that fractional and absolute exposure both contribute partial effects of interference, so ideally one wishes to avoid having to choose between one of the two exposure models.  On the other hand, both features are easily included in Model~\ref{model:linear} by encoding both the fraction and raw number of treated neighbors in $X_i$.  \hl{(Including both features only makes sense when working with a complex network.  If the interference structure is comprised of large, disjoint, and equally-sized clusters, as in partial interference, then the fraction and number of treated neighbors encode roughly the same information and one obtains a collinearity scenario that violates the full-rank assumption of Proposition~\ref{prop:linear-unbiased}.  The methods in this paper are primarily motivated by the complex network setting.)}

In a similar manner, the researcher may wish to handle longer-range interference, such as that coming from two-step or greater neighborhoods.  It is possible to handle two-step information by working with the graph corresponding to the adjacency matrix $A^2$, but this approach is unsatisfactory because presumably one-step interference is stronger than two-step interference, and this distinction is lost by using $A^2$.  On the other hand, if one-step and two-step network information are encoded as separate features, both effects are included and the magnitudes of their coefficients will reflect the strength of the corresponding interference contributed by each feature.

Furthermore, nothing in our framework requires the variables to be constructed from a single network.  Often, the researcher has access to multiple networks defined on the same vertex set---i.e., a \emph{multilayer network}~\citep{kivela2014multilayer}---representing different types of interactions among the units.  For example, social networking sites such as Facebook and Twitter contain multiple friendship or follower networks based on the strength and type of interpersonal relationship (e.g.\ family, colleagues, and acquaintances), as well as activity-based networks constructed from event data such as posts, tweets, likes, or comments.  Often these networks are also dynamic in time.  Given the sociological phenomenon that the strength of a tie is an indicator of its capacity for social influence~\citep{granovetter1973strength} and that people use different mediums differently when communicating online~\citep{haythornthwaite1998work}, any or all of these network layers can conceivably be a medium for interference in varying amounts depending on the treatment variable and outcome metric in question.  In our framework graph features from different network layers are easily included in the model.

\subsection{Exogeneity assumptions}

Consider the following assumptions on the residuals.
\begin{assumption}
\label{asm:errors}
\begin{enumerate}[(a)]
\item  The errors are strictly exogenous: $\E[\ep_i | X_1, \dots, X_n] = 0$ for all $i \in [n]$. \label{asm:errors-exogenous}
\item The errors are independent. \label{asm:errors-independent}
\item The errors are homoscedastic: $\var(\ep_i | X_1, \dots, X_n) = \sigma^2$ for all $i \in [n]$.  \label{asm:errors-homoscedastic}
\end{enumerate}
\end{assumption}

%\note{What is strict exogeneity?  Do we need to condition on $W_i$ as well?}

\hl{Assumption~\ref{asm:errors}(\ref{asm:errors-exogenous}) captures the requirement that the features contain all of the information needed to adjust for the bias contributed by interference, and thus is similar to an unconfoundedness or ignorability assumption often invoked in observational studies. Point estimates can be constructed based only on Assumption 3(a), but variance estimation requires Assumption~\ref{asm:errors}(\ref{asm:errors-independent}) so that each data point contributes additional independent information.  Note that in the SUTVA case Assumption~\ref{asm:errors}(\ref{asm:errors-exogenous}) is all that is needed for valid inference.  However under interference, it is possible that conditioning on the features removes all bias but interference is still present in the errors, in which case i.i.d.-based standard errors would be incorrect.}  These assumptions cannot be verified from the data, and so this setup borrows all of the problems that come with selecting an exposure model or being able to verify unconfoundedness. However, our setup is slightly different because of the flexibility afforded by the \hl{features}. Compared to what we envision as the usual observational studies setting, our \hl{features} are constructed from the treatment vector and social network rather than being collected in the wild, and so they are quite cheap to construct via feature engineering. That said, more work for conducting sensitivity analysis for interference or spillover effects is certainly needed.

Assumption~\ref{asm:errors}(\ref{asm:errors-homoscedastic}) is the easiest to deal with if violated. One may use a heteroscedisticity-consistent estimate of the covariance matrix, also known as the sandwich estimator or the Eicker-Huber-White estimator~\citep{eicker1967limit,huber1967behavior,white1980heteroskedasticity}.  In this paper we invoke Assumption~\ref{asm:errors}(\ref{asm:errors-homoscedastic}) mainly to simplify notation, but heteroscedasticity-robust extensions are straightforward.

For $X_i$ following equation~\eqref{eqn:x-determ-fn}, denote
\[X_i^{(\+0)} = x_i(\Wni = \+0), \qquad X_i^{(\+1)} = x_i(\Wni = \+1).\]
The variable $X_i^{(\+0)}$ represents the context for unit $i$ under the counterfactual scenario that $i$ is exposed to global control, and the variable $X_i^{(\+1)}$ represents the context for unit $i$ under the counterfactual scenario that $i$ exposed to global treatment.  Both of these values are non-deterministic.\footnote{
In the event $X_i$ are not defined through a function $x_i(\cdot)$, one may work with $X_i^{(\+0)} = X_i | (\Wni = \+0)$ and $X_i^{(\+1)} = X_i | (\Wni = \+1)$,
where this notation means that $X_i^{(\+0)}$ follows the conditional distribution of $X_i$, conditionally on the event that $\Wni = \+0$, and similarly for $X_i^{(\+1)}$.  In this case $X_i^{(\+0)}$ and $X_i^{(\+1)}$ may be random, and estimands can be defined using $\E[X_i^{(\+0)}]$ and $\E[X_i^{(\+1)}]$ instead.}
For example, if $X_i$ be the ``mean treated'' \hl{statistic} as defined in equation~\eqref{eqn:frac-nbh}, then $X_i^{(\+0)} = 0$ and $X_i^{(\+1)} = 1$ for every unit $i \in [n]$.

We now consider the estimand under Model~\ref{model:linear} and Assumption~\ref{asm:errors}.  The GATE for Model~\ref{model:linear} is
\begin{align*}
\tau &= \avgin \left[\E[Y_i | \+W = \+1] - \E[Y_i | \+W = \+0]\right] \\
&= \avgin \left[\mu^{(1)}(X_i^{(\+1)}) - \mu^{(0)}(X_i^{(\+0)})\right] \\
&= \avgin \left[(X_i^{(\+1)})^\top \beta_1 - (X_i^{(\+0)})^\top \beta_0 \right],
\end{align*}
where the second equality is by Assumption~\ref{asm:errors}(\ref{asm:errors-exogenous}).  Now introduce the quantities
\[\omega_0 = \avgin X_i^{(\+0)}, \qquad \omega_1 = \avgin X_i^{(\+1)},\]
which are the mean counterfactual \hl{feature} values for global control and global treatment, averaged over the population.  We emphasize that $\omega_0$ and $\omega_1$ are non-deterministic and known, because the distribution of $X_i$ is assumed to be known.
We then have
\begin{equation}
\label{eqn:linear-tau}
\tau = \omega_1^\top \beta_1 - \omega_0^\top\beta_0.
\end{equation}
Such an estimand, which focuses on the \hl{statistics} of the finite population at hand, is natural in the network setting where there is no clear superpopulation or larger network of interest.  

We now construct an estimator by estimating the regression coefficients with ordinary least squares.  For $w = 0, 1$, let $X_w$ be the $N_w \times p$ design matrix corresponding to \hl{features} belonging to treatment group $w$, where the first column of $X_w$ is a column of ones.  Let $y_w$ be the $N_w$-vector of observed responses $Y_i$ for treatment group $w$. 
Then we use the standard OLS estimator
\begin{equation}
\label{eqn:beta-hat}
\hat \beta_w = (X_w^\top X_w)^{-1} X_w^\top y_w.
\end{equation}
The estimate of the treatment effect is taken to be the difference in mean predicted outcomes under the global treatment and control counterfactual distributions,
\begin{align}
\hat \tau &= \omega_1^\top\hat \beta_1 - \omega_0^\top\hat \beta_0. \label{eqn:linear-tau-hat} 
%&= (\bar y_1 + (\omega_1 - \bar X_1) \hat \beta_1) - (\bar y_0 + (\omega_0 - \bar X_0) \hat \beta_0). \nonumber
\end{align}
Assuming Model~\ref{model:linear} holds, $\hat \tau$ is an unbiased estimate of $\tau$, which follows from unbiasedness of the OLS coefficients.
\begin{proposition}
\label{prop:linear-unbiased}
Suppose Model~\ref{model:linear} and Assumptions~\ref{asm:bernoulli-design}, \ref{asm:x-indep-w}, and~\ref{asm:errors}(\ref{asm:errors-exogenous}) hold.  Let $\tau$ and $\hat \tau$ be defined as in equations~\eqref{eqn:linear-tau} and~\eqref{eqn:linear-tau-hat}, and let $\hat \beta_w$ for $w = 0, 1$ be OLS estimators as defined in equation~\eqref{eqn:beta-hat}.  Then conditionally on $X_w$ being full (column) rank,\footnote{
	Since $X_w$ is random and depends on $\+W$, conditioning on $X_w$ having full column rank is necessary, even though this condition may not be fulfilled for all realizations of the treatment vector.  For example, if $X_w$ contains a column for the fraction of neighbors treated, then it is possible though highly unlikely for all units to be assigned to treatment, in which case this column is collinear with the intercept and $X_w$ is not full rank.  We shall, for the most part, ignore this technicality and assume that the \hl{features} are chosen so that the event that $X_w^\top X_w$ is singular doesn't happen very often, and is in fact negligible asymptotically.  \hl{Understanding  combinations of network structures and interference mechanisms that give rise to singular $X_w$ is of interest to practitioners but outside the scope of our study here.}
} 
$\hat \beta_w$ is an unbiased estimator of $\beta_w$ and $\hat \tau$ is an unbiased estimator of $\tau$.  
\end{proposition}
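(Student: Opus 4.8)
The plan is to reduce the claim to the textbook unbiasedness of ordinary least squares, with the one extra wrinkle being that the regression design for each treatment arm is itself random. I would work conditionally on the pair $(X_1,\dots,X_n,\+W)$, which is exactly the information that fixes both design matrices $X_0,X_1$ and the membership of each arm, and restrict throughout to the event that $X_w$ has full column rank so that $(X_w^\top X_w)^{-1}$ in equation~\eqref{eqn:beta-hat} is well defined. Write $X_{1:n}$ for $(X_1,\dots,X_n)$.

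First I would write Model~\ref{model:linear} in matrix form arm by arm. For the units with $W_i = w$ the separate-slopes specification gives $Y_i = \beta_w^\top X_i + \ep_i$, so stacking yields $y_w = X_w \beta_w + \ep_w$, where $\ep_w$ collects the residuals of the arm-$w$ units. Substituting into equation~\eqref{eqn:beta-hat} and cancelling gives the standard decomposition
\[\hat\beta_w = \beta_w + (X_w^\top X_w)^{-1} X_w^\top \ep_w.\]
Conditioning on $(X_{1:n},\+W)$ fixes the matrix $(X_w^\top X_w)^{-1}X_w^\top$, so
\[\E[\hat\beta_w \mid X_{1:n},\+W] = \beta_w + (X_w^\top X_w)^{-1}X_w^\top\, \E[\ep_w \mid X_{1:n},\+W],\]
and conditional unbiasedness of $\hat\beta_w$ reduces to showing that the conditional mean of each residual vanishes.

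The $\hat\tau$ part is then immediate by linearity. Since $\omega_0$ and $\omega_1$ are known and do not depend on the estimation noise, from equation~\eqref{eqn:linear-tau-hat} I get $\E[\hat\tau \mid X_{1:n},\+W] = \omega_1^\top\beta_1 - \omega_0^\top\beta_0$, which is exactly $\tau$ by equation~\eqref{eqn:linear-tau}; taking expectations over the full-rank event then removes the conditioning and yields unconditional unbiasedness.

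The crux, and the only nonroutine step, is justifying $\E[\ep_w \mid X_{1:n},\+W] = 0$, i.e.\ that the residuals keep mean zero even after we condition on the realized assignment that decides which units populate each arm. Assumption~\ref{asm:errors}(\ref{asm:errors-exogenous}) gives $\E[\ep_i \mid X_{1:n}] = 0$, so I need that the extra conditioning on $\+W$ adds nothing. This is where I expect the work to lie: I would argue that selecting on group membership does not bias the errors because of the independence structure --- treatments are independent across units (Assumption~\ref{asm:bernoulli-design}) and each $X_i$ is a function of $\Wni$ alone, not of $W_i$ (Assumption~\ref{asm:x-indep-w} and equation~\eqref{eqn:x-determ-fn}) --- so that, reading Assumption~\ref{asm:errors}(\ref{asm:errors-exogenous}) in the potential-outcomes sense as an ignorability/unconfoundedness condition, conditioning on $W_i$ leaves the conditional law of $\ep_i$ given the features unchanged, giving $\E[\ep_i \mid X_{1:n},\+W] = \E[\ep_i \mid X_{1:n}] = 0$. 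I would flag this as the place where randomization of $\+W$ and exogeneity of the errors must be combined carefully, since it is precisely the subtlety the paper emphasizes: even with randomized treatment, one still needs an unconfoundedness-type assumption for the adjustment to be unbiased.
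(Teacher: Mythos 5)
Your proof follows essentially the same route as the paper's: stack the arm-$w$ observations to get $y_w = X_w\beta_w + \ep_w$, obtain the decomposition $\hat\beta_w = \beta_w + (X_w^\top X_w)^{-1}X_w^\top\ep_w$ on the full-rank event, kill the error term by exogeneity, and finish with linearity of expectation for $\hat\tau$. The one place you go beyond the paper is precisely the step you flag as the crux. The paper's proof simply asserts that Assumption~\ref{asm:errors}(\ref{asm:errors-exogenous}) makes $\E[(X_w^\top X_w)^{-1}X_w^\top\ep_w]$ vanish, whereas you correctly observe that both the weight matrix $(X_w^\top X_w)^{-1}X_w^\top$ and the membership of arm $w$ are functions of $\+W$, not of the features alone, so what is actually needed is $\E[\ep_i \mid X_1,\dots,X_n,\+W] = 0$ --- a strictly stronger statement than the literal Assumption~\ref{asm:errors}(\ref{asm:errors-exogenous}) when the features do not determine $\+W$ (knowing every $X_j$ does not reveal which units were treated). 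Your resolution --- mutually independent treatments, $X_i$ a function of $\Wni$ only, and an ignorability reading of exogeneity --- is informal rather than a derivation from the stated assumptions, but it identifies the right strengthened condition; an alternative, cleaner patch is to invoke the strict-exogeneity form already used in Section~\ref{sec:setup}, namely $\E[\ep_i \mid W_1,\dots,W_n] = 0$, which, since the features are functions of $\+W$, implies by the tower property both Assumption~\ref{asm:errors}(\ref{asm:errors-exogenous}) and the conditional statement your argument requires. In this one respect your write-up is more careful than the paper's own proof.
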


(Proofs for Proposition~\ref{prop:linear-unbiased} and other results are deferred to the appendix.) Notice that the treatment group predicted mean is
\[\omega_1^\top \hat \beta_1 = \omega_1^\top(X_1^\top X_1)^{-1} X_1^\top y_1\]
and the control group predicted mean is
\[\omega_0^\top \hat \beta_0 = \omega_0^\top(X_0^\top X_0)^{-1} X_0^\top y_0.\]
Therefore $\hat \tau$ is linear in the observed response vector $y$.  
% \begin{align*}
% \bar y_1 + (\omega_1 - \bar X_1) \hat \beta_1 &= \bar y_1 + (\omega_1 - \bar X_1)^\top(\tilde X_1^\top \tilde X_1)^{-1} \tilde X_1^\top (y_1 - \bar y_1) \\
% &= (N_1^{-1} \+1^\top  + (\omega_1 - \bar X_1)^\top(\tilde X_1^\top \tilde X_1)^{-1} \tilde X_1^\top (I - N_1^{-1} \+1\+1^\top))y_1,
% \end{align*}
% where we have written $\bar y_1 = N_1^{-1} \+1^\top y_1$ and $y_1 - \bar y_1 = (I - N_1^{-1} \+1\+1^\top) y_1$.  Similarly,
% \[\bar y_0 + (\omega_0 - \bar X_0)^\top \hat \beta_0 = (N_0^{-1} \+1^\top + (\omega_0 - \bar X_0)^\top (\tilde X_0^\top \tilde X_0)^{-1} \tilde X_0^\top (I - N_0^{-1} \+1\+1^\top ))y_0.\]
% Therefore, 
% \[\hat \tau = (\bar y_1 + (\omega_1 - \bar X_1)^\top \hat \beta_1) - (\bar y_0 + (\omega_0 - \bar X_0)^\top \hat \beta_0)\]
% is a linear function in the (uncentered) response $y$.  
That is, $\tau = a_0^\top y_0 + a_1^\top y_1$ where the weight vectors $a_0 \in \R^{N_0}$ and $a_1 \in \R^{N_1}$ are given by
\begin{align}
a_0^\top &= \omega_0^\top(X_0^\top X_0)^{-1} X_0^\top \label{eqn:ols-weight0} \\
a_1^\top &= \omega_1^\top(X_1^\top X_1)^{-1} X_1^\top. \label{eqn:ols-weight1}
\end{align}
% \begin{align}
% a_0 &= -N_0^{-1} \+1^\top - (\omega_0 - \bar X_0)^\top(\tilde X_0^\top \tilde X_0)^{-1} \tilde X_0^\top (I - N_0^{-1} \+1\+1^\top ) \label{eqn:ols-weight0} \\
% a_1 &= N_1^{-1} \+1^\top  + (\omega_1 - \bar X_1)^\top(\tilde X_1^\top \tilde X_1)^{-1} \tilde X_1^\top (I - N_1^{-1} \+1\+1^\top). \label{eqn:ols-weight1}
% \end{align}
These weights allow us to compare the reweighting strategy with that of other linear estimators, such as the H\'ajek estimator, which is a particular weighted mean of $y$.  More details are provided in Section~\ref{sec:sim-hajek-description}, with an example given in Section~\ref{sec:sim-weights}.

\subsection{Inference}

Now we provide variance expressions under the assumption that the errors are exogenous, independent, and homoscedastic, as in Assumption~\ref{asm:errors}.

\begin{theorem}
\label{thm:ols-variance-finite}
Suppose Model~\ref{model:linear} and Assumptions~\ref{asm:bernoulli-design}, \ref{asm:x-indep-w}, and~\ref{asm:errors} hold.  Then
\begin{equation}
\label{eqn:ols-variance-finite}
\var(\hat \tau) = \sigma^2 (\|\omega_0\|_{\Gamma_0}^2 + \|\omega_1\|_{\Gamma_1}^2),
%\var(\hat \tau) = \sigma^2 \left(\E\left[\frac{1}{N_0}\right] + \E\left[\frac{1}{N_1}\right] + \eta + \omega^\top \Delta \omega\right),
\end{equation}
where $\|v\|_M^2 = v^\top M v$, and $\Gamma_w = \E[(X_w^\top X_w)^{-1}]$, and $\omega_w$ is the mean of the counterfactual \hl{feature} distribution (including an intercept) for $w = 0, 1$.
% \begin{align*}
% \eta &= \E[\bar X_0^\top (\tilde X_0^\top \tilde X_0)^{-1}\bar X_0] + \E[\bar X_1^\top (\tilde X_1^\top \tilde X_1)^{-1} \bar X_1] \in \R, \\
% \Delta &=  \begin{pmatrix}
% \E[(\tilde X_0^\top \tilde X_0)^{-1}] & 0 \\
% 0 & \E[(\tilde X_1^\top \tilde X_1)^{-1}]
% \end{pmatrix} \in \R^{2p \times 2p},
% \end{align*}
% and $\omega = (\omega_0, \omega_1)^\top \in \R^{2p}$, 
%and where the sample sizes are $N_1 \sim \on{Binomial}(n, \pi)$ and $N_0 \sim \on{Binomial}(n, 1 - \pi)$ random variables.
\end{theorem}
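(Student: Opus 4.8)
The plan is to condition on the two design matrices $X_0$ and $X_1$ (equivalently, on the full treatment vector $\+W$, which fixes both the feature values and the group memberships) and apply the law of total variance. Writing the group-$w$ regression as $y_w = X_w \beta_w + \ep_w$, where $\ep_w$ collects the residuals $\ep_i$ of the units with $W_i = w$, the OLS estimator takes the familiar form $\hat \beta_w = \beta_w + (X_w^\top X_w)^{-1} X_w^\top \ep_w$. Conditional on the design being full rank, strict exogeneity gives $\E[\hat \beta_w \mid X_0, X_1] = \beta_w$, so that $\E[\hat \tau \mid X_0, X_1] = \omega_1^\top \beta_1 - \omega_0^\top \beta_0 = \tau$. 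The crucial observation is that this conditional mean equals the deterministic constant $\tau$ (this is essentially the conditional unbiasedness underlying Proposition~\ref{prop:linear-unbiased}); hence the ``between'' term $\var(\E[\hat \tau \mid X_0, X_1])$ in the total-variance decomposition vanishes, and $\var(\hat \tau) = \E[\var(\hat \tau \mid X_0, X_1)]$.

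Next I would compute the conditional variance. By Assumptions~\ref{asm:errors}(\ref{asm:errors-independent}) and~\ref{asm:errors}(\ref{asm:errors-homoscedastic}), conditional on the design the within-group error vector is mean zero with covariance $\sigma^2 I$, which yields the standard sandwich $\var(\hat \beta_w \mid X_0, X_1) = \sigma^2 (X_w^\top X_w)^{-1}$. Since the residuals indexing group $0$ and group $1$ are disjoint and independent by Assumption~\ref{asm:errors}(\ref{asm:errors-independent}), the estimators $\hat \beta_0$ and $\hat \beta_1$ are conditionally uncorrelated, so the cross term in $\var(\omega_1^\top \hat \beta_1 - \omega_0^\top \hat \beta_0 \mid X_0, X_1)$ drops out. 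This gives
\[
\var(\hat \tau \mid X_0, X_1) = \sigma^2 \left( \omega_1^\top (X_1^\top X_1)^{-1} \omega_1 + \omega_0^\top (X_0^\top X_0)^{-1} \omega_0 \right).
\]
Finally I would take the expectation over the design. Because $\omega_0$ and $\omega_1$ are known, deterministic vectors they pass outside the expectation, leaving $\E[(X_w^\top X_w)^{-1}] = \Gamma_w$ and hence $\var(\hat \tau) = \sigma^2 (\omega_1^\top \Gamma_1 \omega_1 + \omega_0^\top \Gamma_0 \omega_0) = \sigma^2 (\|\omega_0\|_{\Gamma_0}^2 + \|\omega_1\|_{\Gamma_1}^2)$, as claimed.

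The main obstacle --- and the place where the interference setting genuinely departs from textbook OLS --- is the handling of the \emph{random} design. Here $X_w$ is not fixed but is itself a function of $\+W$, so one cannot simply quote the fixed-design variance formula; the law of total variance is essential. Two points need care: first, verifying that conditioning on the realized design (which reveals the full partition into groups, and thus information about $\+W$) preserves the exogeneity, independence, and homoscedasticity of $\ep_w$, so that the stated conditional moments remain valid even though the features are themselves functions of the treatment vector; and second, confirming that the conditional mean is \emph{exactly} the constant $\tau$, which is precisely what annihilates the between-group term and collapses the answer to a single expectation of $(X_w^\top X_w)^{-1}$ rather than a more intricate expression involving the variability of the design. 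Once the constant-conditional-mean fact is in hand and the conditional second moments are justified, the remaining algebra is routine.
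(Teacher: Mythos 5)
Your proposal is correct and follows essentially the same route as the paper: both condition on the random design and use the law of total variance, with strict exogeneity annihilating the mean term, independence and homoscedasticity yielding the conditional sandwich $\sigma^2 (X_w^\top X_w)^{-1}$, and cross-group independence of the residuals killing the covariance between $\hat\beta_0$ and $\hat\beta_1$. The only difference is organizational --- the paper applies the total-variance decomposition to each $\hat\beta_w$ and to $\cov(\hat\beta_0,\hat\beta_1)$ separately before combining, whereas you apply it to $\hat\tau$ directly --- which is a trivially equivalent rearrangement of the same argument.
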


\subsubsection{Variance estimation}
In order to estimate the variance~\eqref{eqn:ols-variance-finite}, we must estimate the quantities $\Gamma_0 = \E[(X_0^\top X_0)^{-1}]$ and $\Gamma_1 = \E[(X_1^\top X_1)^{-1}]$, which are the expected inverse sample covariance matrices.  Of course, $(X_0^\top X_0)^{-1}$ and $(X_1^\top X_1)^{-1}$ are observed and unbiased estimators.    However, unlike standard \hl{baseline characteristics} collected in the wild, we envision that the $X_i$ are constructed from the graph $G$ and the treatment vector $\+W$, and so we can take advantage of the fact that the distribution of $X_i$ is completely known to the researcher.  It is thus possible to compute $\Gamma_0$ and $\Gamma_1$ up to arbitrary precision by repeated Monte Carlo sampling from the randomization distribution of $\+W$.   For clarity, this estimation procedure is illustrated in Algorithm~\ref{alg:mc-var-ols}.

\begin{algorithm}[t]
\caption{Estimating $\Gamma_0$ and $\Gamma_1$ by Monte Carlo}
\label{alg:mc-var-ols}
\begin{algorithmic}
\FOR{b = 1:B} 
  \STATE{Sample treatment $\+W_b \in \cW$ and compute corresponding \hl{features} $X_{b,i}$ and sample sizes $N_{b,0}$ and $N_{b,1}$}
  \STATE{Calculate sample covariances%means 
 %    \[
 %    \bar X_{b,0} \leftarrow \frac{1}{N_{b,0}} \sumin (1 - W_{b,i}) X_{b,i}, \qquad
 %    \bar X_{b,1} \leftarrow \frac{1}{N_{b,1}} \sumin W_{b,i} X_{b,i}
	% \]
 %    and covariances
    \begin{align*}
    (\tilde X_0^\top \tilde X_0)_b &\leftarrow \frac{1}{N_{b,0}} \sumin (1 - W_{b,i})X_{b,i}X_{b,i}^\top \\
    (\tilde X_1^\top \tilde X_1)_b &\leftarrow \frac{1}{N_{b,1}} \sumin W_{b,i}X_{b,i}X_{b,i}^\top
    \end{align*}
  } 
\ENDFOR
\RETURN{Moment estimates
  \begin{align*}
  \hat \Gamma_w &\leftarrow \widehat \E[(\tilde X_w^\top \tilde X_w)^{-1}] = \frac{1}{B} \sum_{b=1}^B (\tilde X_w^\top \tilde X_w)_b^{-1}
  % \widehat \var(\bar X_w) &\leftarrow \frac{1}{B}\sum_{b=1}^B \left(\bar X_{b,w} - \overline{\bar X_w}\right)^2, \qquad w = 0, 1, \\
  % \widehat \cov(\bar X_0, \bar X_1) &\leftarrow \frac{1}{B} \sum_{b=1}^B\left(\bar X_{b,0} - \overline{\bar X_0}\right)\left(\bar X_{b,1} - \overline{\bar X_1}\right),
  \end{align*}
  % where $\overline{\bar X_w} = B^{-1} \sum_{b=1}^B \bar X_{b,w}$ for $w = 0, 1$.
  for $w = 0, 1$.
}
% \RETURN{
% \begin{align*}
% \hat \Gamma_w &= 
% %  &\leftarrow \begin{pmatrix} \widehat \var(\bar X_0) & -\widehat \cov(\bar X_0, \bar X_1) \\
% % -\widehat \cov(\bar X_0, \bar X_1) & \widehat \var(\bar X_1)
% % \end{pmatrix} 
% % \hat \Delta &\leftarrow  \begin{pmatrix}
% % \widehat \E[(\tilde X_0^\top \tilde X_0)^{-1}] & 0 \\
% % 0 & \widehat \E[(\tilde X_1^\top \tilde X_1)^{-1}]
% % \end{pmatrix}.
% \end{align*}
% }
\end{algorithmic}
\end{algorithm}

Finally, we can estimate $\sigma^2$ in the usual way, with the residual mean squared error
\[\hat \sigma^2 = \avgin \left(Y_i - W_i(\hat \beta_1^\top  X_i) - (1 - W_i)(\hat \beta_0^\top X_i)\right)^2.\]
Equipped with $\hat \sigma^2$ and Monte Carlo estimates $\hat \Gamma_w$, we can use the variance estimate
\begin{equation}
\label{eqn:ols-varest}
\widehat{\var}(\hat \tau) = \hat \sigma^2\left(\|\omega_0\|_{\hat \Gamma_0}^2 + \|\omega_1\|_{\hat \Gamma_1}^2\right).
\end{equation}

%\note{Any chance you can briefly more formally motivate the use of Eq (13), the plug-in of $\hat \sigma^2$? Like, the term ($||\omega_0|| + ||\omega_1||$), even with the $\hat \Gamma$ estimates, is not really random in the limit of infinite MC estimation, so it's just a scalar, right? So the plug-in is totally fine?}

\subsection{Asymptotic results}

Proposition~\ref{prop:linear-unbiased} and Theorem 1 characterize the finite $n$ expectation and variance of the treatment effect estimator under Model 4.  Establishing an asymptotic result is more nuanced, as because of the dependence among units implied by interference, the quantities $\E[(X_0^\top X_0)^{-1}]$ and $\E[(X_1^\top X_1)^{-1}]$ may not be $O(n^{-1})$ in which case $\hat \tau$ would not converge at a $\sqrt{n}$ rate.  This is a problem with dealing with interference in general, making comparisons to the semiparametric efficiency bound~\citep{hahn1998role}, a standard benchmark in the SUTVA case, difficult in this setting.  However we can state a $\sqrt{n}$ central limit theorem in the event that the sample mean and covariance do scale and converge appropriately.  To do so, we implicitly assume existence of a sequence of populations indexed by their size $n$, and that the parameters associated with each population setup, such as $\beta_0$, $\beta_1$, $\pi$, and $\sigma^2$, converge to appropriate limits.  Such an asymptotic regime is the standard for results of this sort~\citep[cf.][]{freedman2008regressionA,freedman2008regressionB,lin2013agnostic,abadie2017should,abadie2017sampling,savje2017average,chin2018central}.  We suppress the index on $n$ to avoid notational clutter.

So that we can compare to previous works, it is helpful to reparametrize the linear regression setup so that the intercept and slope coefficients are written separately.  That is, let $X_i$ and $\omega_w$ be redefined to exclude the intercept, and let $\beta_w = (\alpha_w, \eta_w)$ so that the mean functions are written $\mu^{(w)}(x) = \alpha_w + \eta_w^\top x$, where $\alpha_w$ is the intercept parameter and $\eta_w$ is the vector of slope coefficients.  Then the GATE is
\[\tau = (\alpha_1 + \omega_1^\top \eta_1) - (\alpha_0 + \omega_0^\top \eta_0).\]
Denote the within-group sample averages by
\[\bar y_1 = \frac{1}{N_1} \sumin W_i Y_i, \qquad \bar y_0 = \frac{1}{N_0} \sumin (1 - W_i) Y_i\]
and
\[\bar X_0 = \frac{1}{N_1} \sumin W_i X_i, \qquad \bar X_0 = \frac{1}{N_0} \sumin (1 - W_i) X_i.\]  
Since the intercept is determined by $\hat \alpha_w = \bar y_w - \bar X_w^\top \hat \eta_w$,
the estimator $\hat \tau$, equation~\eqref{eqn:linear-tau-hat}, is written as
\begin{align}
\hat \tau &= (\hat \alpha_1 + \omega_1^\top \hat \eta_1) - (\hat \alpha_0 + \omega_0^\top \hat \eta_0). \nonumber\\
&= \bar y_1 - \bar y_0 + (\omega_1 - \bar X_1)^\top \hat \eta_1 - (\omega_0 - \bar X_0)^\top \hat \eta_0. \label{eqn:ols-estimator-with-int}
\end{align}
Now, $\hat \tau$ is seen to be an adjustment of the difference-in-means estimator $\bar y_1 - \bar y_0$.  The adjustment depends on both the estimated strength of interference, $\hat \eta_w$, and the discrepancy between the means of the observed distribution and the reference or target distribution, $\bar X_w - \omega_w$.  This linear shift is a motif in the regression adjustment literature, and is reminiscent of, e.g., equation (16) of~\citet{aronow2013class}.

We now state a central limit theorem for $\hat \tau$.  
\begin{theorem}
\label{thm:ols-clt}
Assume the setup of Theorem~\ref{thm:ols-variance-finite}.  Assume further that the sample moments converge in probability:
\begin{align*}
\bar X = \avgin X_i& \pto \mu_X, \\
S = \avgin (X_i - \bar X)^\top (X_i - \bar X) &\pto \Sigma_X,
\end{align*}
where $\Sigma_X$ is positive definite, and that all fourth moments are bounded.
Then $\sqrt{n}(\hat \tau - \tau) \wto N(0, V)$, 
where 
\begin{equation}
\label{eqn:asymp-var}
V = \sigma^2 \left(\frac{1}{\pi(1 - \pi)} + \frac{\|\omega_0 - \mu_X\|_{\Sigma_X^{-1}}^2}{1 - \pi} + \frac{\|\omega_1 - \mu_X\|_{\Sigma_X^{-1}}^2}{\pi}\right).
\end{equation}
\end{theorem}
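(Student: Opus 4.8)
The plan is to reduce $\sqrt n(\hat\tau - \tau)$ to a normalized sum of terms that become independent once we condition on the design, and then invoke a triangular-array central limit theorem. Starting from the adjusted difference-in-means form~\eqref{eqn:ols-estimator-with-int}, I would substitute the separate-slopes model so that, for a unit $i$ in group $w$, $Y_i = \alpha_w + \eta_w^\top X_i + \ep_i$ and hence $\bar y_w = \alpha_w + \eta_w^\top \bar X_w + \bar\ep_w$. Inserting this together with $\tau = (\alpha_1 + \omega_1^\top\eta_1) - (\alpha_0 + \omega_0^\top\eta_0)$ into \eqref{eqn:ols-estimator-with-int}, the intercept and $\eta$-linear pieces cancel in matched pairs, leaving the exact identity
\[
\hat\tau - \tau = (\bar\ep_1 - \bar\ep_0) + (\omega_1 - \bar X_1)^\top(\hat\eta_1 - \eta_1) - (\omega_0 - \bar X_0)^\top(\hat\eta_0 - \eta_0),
\]
which cleanly separates the fluctuation of the group error means from the estimation error in the slopes.

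Next I would linearize. Writing the OLS slope error as $\hat\eta_w - \eta_w = S_w^{-1}\,\frac1{N_w}\sum_{i:W_i=w}(X_i - \bar X_w)\ep_i$, where $S_w$ is the within-group sample covariance of the features, I would use the hypotheses $\bar X_w \pto \mu_X$, $S_w \pto \Sigma_X$ (with $\Sigma_X$ positive definite) and $N_w/n \pto \pi_w$ (the latter by the law of large numbers under Assumption~\ref{asm:bernoulli-design}, with $\pi_1 = \pi$, $\pi_0 = 1-\pi$). Since the normalized score $\frac1{N_w}\sum_{i:W_i=w}(X_i - \mu_X)\ep_i$ is $O_p(n^{-1/2})$, each replacement of a random multiplier by its probability limit contributes a product of an $o_p(1)$ factor and an $O_p(n^{-1/2})$ factor, i.e. an $o_p(n^{-1/2})$ error. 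Collecting these gives the asymptotically linear representation $\sqrt n(\hat\tau - \tau) = \frac1{\sqrt n}\sum_{i=1}^n \xi_i + o_p(1)$ with
\[
\xi_i = \frac{W_i}{\pi}\,\ep_i\bigl(1 + c_1^\top(X_i - \mu_X)\bigr) - \frac{1-W_i}{1-\pi}\,\ep_i\bigl(1 + c_0^\top(X_i - \mu_X)\bigr), \qquad c_w = \Sigma_X^{-1}(\omega_w - \mu_X).
\]
The bounded fourth moment hypothesis is what justifies both the stated stochastic orders and the convergence $S_w \pto \Sigma_X$.

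The central obstacle is that the $\xi_i$ are \emph{not} independent: each $X_i = x_i(\Wni)$ is a function of the shared treatment vector, so the features are coupled across units --- this is precisely the interference-induced dependence. I would resolve it by conditioning on the entire design $\+W$. Given $\+W$, all features $X_i$ and group labels are fixed, so $\xi_i = h_i(\+W)\,\ep_i$ for fixed weights $h_i$, and by Assumption~\ref{asm:errors} the $\ep_i$ are independent, mean zero, and homoscedastic. Thus $\frac1{\sqrt n}\sum_i\xi_i$ is conditionally a weighted sum of independent variables, and the bounded fourth moments furnish a conditional Lyapunov condition, so Lindeberg--Feller delivers conditional asymptotic normality. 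The conditional variance equals $\frac{\sigma^2}{n}\sum_i h_i(\+W)^2$; since $W_i(1-W_i)=0$ the cross term drops, and using $W_i \indep X_i$ together with $\bar X \pto \mu_X$ and $S \pto \Sigma_X$ one checks that $\frac1n\sum_i W_i(1 + c_1^\top(X_i-\mu_X))^2 \pto \pi(1 + c_1^\top\Sigma_X c_1)$, and analogously for the control group. Since $c_w^\top\Sigma_X c_w = \|\omega_w - \mu_X\|_{\Sigma_X^{-1}}^2$ and $\tfrac1\pi + \tfrac1{1-\pi} = \tfrac1{\pi(1-\pi)}$, the conditional variance converges in probability to the claimed $V$.

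Finally I would pass from the conditional to the unconditional statement. Writing $\E[e^{it\sqrt n(\hat\tau-\tau)}] = \E\bigl[\E[e^{it\sqrt n(\hat\tau-\tau)}\mid\+W]\bigr]$, the inner conditional characteristic function converges in probability to $e^{-t^2 V/2}$ by the conditional CLT and the convergence of the conditional variance, and is bounded by $1$, so bounded convergence gives the unconditional limit $e^{-t^2 V/2}$, yielding $\sqrt n(\hat\tau - \tau)\wto N(0,V)$. I expect the delicate points to be the uniform control needed to turn the pointwise Lindeberg verification into an ``in probability over designs'' statement, and the bookkeeping that confirms the $o_p(n^{-1/2})$ remainder after all three multiplier substitutions; the bounded fourth moments are the workhorse for both.
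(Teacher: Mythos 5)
Your proposal is correct, and although it starts from the same exact decomposition as the paper --- $\hat\tau - \tau = (\bar\ep_1 - \bar\ep_0) + (\omega_1 - \bar X_1)^\top(\hat\eta_1 - \eta_1) - (\omega_0 - \bar X_0)^\top(\hat\eta_0 - \eta_0)$, which is essentially forced by equation~\eqref{eqn:ols-estimator-with-int} --- the engine you use to obtain normality is genuinely different. The paper is modular: Lemma~\ref{lem:convergences} asserts marginal asymptotic normality of each of the three pieces (parts (e) and (f)), notes they are uncorrelated, and adds the variances; you instead collapse everything into a single asymptotically linear statistic $\tfrac{1}{\sqrt n}\sum_i \xi_i$, make the summands independent by conditioning on the whole design $\+W$ (so the only remaining randomness is the independent, homoscedastic $\ep_i$), apply Lindeberg--Feller conditionally, and de-condition through characteristic functions, bounded convergence, and a subsequence argument. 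Your route buys two concrete advantages: it yields the limit of the \emph{sum} directly, whereas marginal normality of uncorrelated terms does not by itself give normality of their sum (a joint-convergence step the paper leaves implicit), and it makes explicit where conditional independence of the errors given the design enters, which in the paper is buried in the unproved Gaussianity assertion inside the proof of Lemma~\ref{lem:convergences}(e) --- an assertion that would itself need a conditional argument like yours. Your remainder bookkeeping is also self-contained: every substitution error is an $o_p(1)$ coefficient discrepancy multiplying an $O_p(n^{-1/2})$ error-driven score, so you never need a $\sqrt n$-rate for $\bar X_w - \mu_X$, which the paper's one-line justification of the analogous step glosses over; the price is the heavier analytic overhead you flag. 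Two small cautions: the within-group convergences $\bar X_w \pto \mu_X$, $S_w \pto \Sigma_X$, $N_w/n \pto \pi_w$ are not hypotheses of the theorem but consequences of the full-sample hypotheses plus Bernoulli sampling (this is exactly Lemma~\ref{lem:convergences}(a)--(b), which your variance computation in effect re-derives), and your conditioning on $\+W$ reads Assumption~\ref{asm:errors} as holding given the full design rather than given $(X_1, \dots, X_n)$ only, a strictly finer $\sigma$-field --- but this is the same reading the paper itself relies on when conditioning on $X_w$, whose row composition is determined by $\+W$, so it is an inherited ambiguity rather than a flaw in your argument.
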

The terms in expression~\eqref{eqn:asymp-var} are unpacked versions of the terms in expression~\eqref{eqn:ols-variance-finite}, and can be stated in this way since the \hl{feature} moments converge at the appropriate rate.

%\note{Can you add some exposition contrasting the "V" in Thm 2 to the Var in Thm 1? In a comment after Thm 2 would be good. }

\subsection{Relationship with standard regression adjustments}
The practitioner may also wish to perform standard regression adjustments to adjust for static, contextual node-level variables such as age, gender, and other demographic variables.  This fits easily into the framework of Model 4, as any such static variable $X_i$ can be viewed as simply a constant function of the indirect treatment vector $\Wni$.  Then the adjustment is not used to remove bias but simply to reduce variance by balancing the \hl{feature} distributions.  In this case the counterfactual (global exposure) distribution is the same as the observed distribution, and in particular, $\omega_0 = \mu_X$ and $\omega_1 = \mu_X$.  Hence we see that Theorem~\ref{thm:ols-clt} reduces to the standard asymptotic result for regression adjustments using OLS.
\begin{corollary}
\label{cor:ols-fixed}
Assume the setup of Theorem~\ref{thm:ols-clt}.  Suppose $X_i$ is independent of $\Wni$.  Then $\omega_0 = \mu_X$ and $\omega_1 = \mu_X$ and 
\[\sqrt{n}(\hat \tau - \tau) \wto N\left(0, \frac{\sigma^2}{\pi(1 - \pi)}\right),\]
\end{corollary}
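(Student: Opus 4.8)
The plan is to treat this corollary as a direct specialization of Theorem~\ref{thm:ols-clt}, so that all of the probabilistic content (the central limit theorem itself, along with the moment-convergence hypotheses) is inherited, and the only work is to verify that the independence assumption forces the two quadratic penalty terms in the asymptotic variance~\eqref{eqn:asymp-var} to vanish. Concretely, I would show that $\omega_0 = \omega_1 = \mu_X$ and then substitute into $V$, so that $\|\omega_0 - \mu_X\|_{\Sigma_X^{-1}}^2 = \|\omega_1 - \mu_X\|_{\Sigma_X^{-1}}^2 = 0$ and only the leading term $\sigma^2/(\pi(1-\pi))$ survives.

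The crux is the claim that the counterfactual feature means coincide with the limiting observed feature mean. Recall that $X_i^{(\+0)} = x_i(\Wni = \+0)$ and $X_i^{(\+1)} = x_i(\Wni = \+1)$. If $X_i = x_i(\Wni)$ is a deterministic function of the indirect treatments yet is independent of $\Wni$, then $x_i(\cdot)$ must be almost surely constant; equivalently, $X_i$ carries no information about the treatment vector and plays the role of a static baseline characteristic. In that case $X_i^{(\+0)} = X_i^{(\+1)} = X_i$, so the counterfactual averages collapse to the observed average, $\omega_0 = \omega_1 = \avgin X_i = \bar X$. By the moment-convergence hypothesis imported from Theorem~\ref{thm:ols-clt}, $\bar X \pto \mu_X$, so in the limit $\omega_0 = \omega_1 = \mu_X$.

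With this in hand the conclusion is immediate: plugging $\omega_0 = \omega_1 = \mu_X$ into~\eqref{eqn:asymp-var} annihilates the second and third summands, leaving $V = \sigma^2/(\pi(1-\pi))$, which is exactly the claimed limiting variance.

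I expect the main (admittedly mild) obstacle to be the bookkeeping around whether $\omega_w = \mu_X$ holds exactly or only in the limit. In the finite population the identities give $\omega_0 = \omega_1 = \bar X$ exactly, and it is the convergence $\bar X \pto \mu_X$ that supplies equality with $\mu_X$; I would therefore want to confirm that the $\omega_w$ appearing in the variance expression of Theorem~\ref{thm:ols-clt} are already understood as their limiting values, so that the substitution is clean and no additional Slutsky-type argument is needed beyond what that theorem already provides. As a sanity check I would note that when the features are static the estimator~\eqref{eqn:ols-estimator-with-int} collapses to the usual regression-adjusted difference-in-means estimator, and the resulting variance $\sigma^2/(\pi(1-\pi))$ matches the classical OLS-adjustment asymptotics in the SUTVA literature~\citep{lin2013agnostic}, confirming that the specialization has been carried out correctly.
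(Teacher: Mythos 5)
Your proposal is correct and follows essentially the same route as the paper: specialize Theorem~\ref{thm:ols-clt}, verify $\omega_0 = \omega_1 = \mu_X$ (in the limiting sense, with $\omega_w$ understood as a sequence over the finite populations), and substitute into~\eqref{eqn:asymp-var} so that only the $\sigma^2/(\pi(1-\pi))$ term survives. The one minor difference is how that identity is justified: the paper writes $\omega_0 = \avgin \E[X_i \mid \Wni = \+0] = \avgin \E[X_i] \to \mu_X$ directly from independence, which also covers the case where $X_i$ is random rather than a deterministic function of $\Wni$, whereas you route through the observation that a deterministic $x_i(\Wni)$ independent of $\Wni$ must be a.s.\ constant (which is valid under the Bernoulli design since every treatment vector, including $\+0$ and $\+1$, has positive probability); both arguments are sound.
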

This variance in Corollary~\ref{cor:ols-fixed} is the same asymptotic variance as in the standard regression adjustment setup~\citep[cf.][Theorem 2]{wager2016high}.  In practice, if some components of $X_i$ are static covariates and some are interference variables, then the resulting variance will be decomposed into the components stated in Theorem~\ref{thm:ols-clt} and Corollary~\ref{cor:ols-fixed}.  \hl{Conditioning on both baseline covariates and interference features may in fact be necessary to ensure that Assumption~\ref{asm:errors}(\ref{asm:errors-exogenous}) holds.  For example if $X_i$ is the number of treated neighbors it may be believed that the potential outcomes depend on node degree (in the graph $G$) as well.}

\section{Nonparametric adjustments}
\label{sec:nonparametric}
%\note{Rewrite in terms of two stage regression.  Also make a note that the theory is not too well fleshed out}
In this section we relax the linear model, Model~\ref{model:linear}:
\begin{model}[Non-linear response surface]
\label{model:nonparametric}
Let $Y_i$ follow
\[Y_i = W_i \mu^{(1)}(X_i) + (1 - W_i)\mu^{(0)}(X_i) + \ep_i,\]
with conditional mean response surfaces
\[\mu^{(0)}(x) = \E[Y_i^{(0)} | X = x], \qquad \mu^{(1)}(x) = \E[Y_i^{(1)} | X = x].\]
We make no parametric assumptions on the form of $\mu^{(0)}(x)$ and $\mu^{(1)}(x)$.
\end{model}
We maintain Assumption~\ref{asm:errors}, namely that SUTVA holds conditionally on $X_1, \dots, X_n$.  

In the SUTVA setting, adjustment with OLS works best when the adjustment variables are highly correlated with the potential outcomes; that is, the precision improvement largely depends on the prediction accuracy.  This fact suggests that predicted outcomes obtained from an arbitrary machine learning model can be used for adjustment, an idea formalized by~\citet{wager2016high,wu2017loop}.  Based on ideas from~\citet{aronow2013class}, these papers propose using the estimator
\begin{equation}
\label{eqn:crossfit-sutva}
\frac{1}{n} \sumin \left(\hat \mu_{-i}^{(1)}(X_i) - \hat \mu_{-i}^{(0)}(X_i)\right) + \frac{1}{N_1} \sumin W_i\left(Y_i - \hat \mu_{-i}^{(1)}(X_i)\right) - \frac{1}{N_0} \sumin (1 - W_i)\left(Y_i - \hat \mu_{-i}^{(0)}(X_i)\right),
\end{equation}
where $\hat \mu_{-i}^{(0)}$ and $\hat \mu_{-i}^{(1)}$ are predictions of the potential outcomes obtained without using the $i$-th observation.  
This doubly-robust style approach is called \emph{cross-estimation} by~\citet{wager2016high} and the \emph{leave-one-out potential outcomes} (LOOP) estimator by~\citet{wu2017loop} who focus on imputing the outcomes using a version of leave-one-out cross validation. This estimator is also reminiscent of the \emph{double machine learning} (DML) cross-fitting estimators developed for the observational study setting~\citep{chernozhukov2018double}, which consists of the following two-stage procedure: (a) train predictive machine learning models $\hat e(\cdot)$ of $X_i$ on $W_i$ (the propensity model) and $\hat m(\cdot)$ of $X_i$ on $Y_i$ (the response model), and then (b) use the out-of-sample residuals $W_i - \hat e(X_i)$ and $Y_i - \hat m(X_i)$ in a final stage regression.  The difference in the experimental setting is that the propensity scores are known and so no propensity model is needed.  \citet{wu2017loop} study the behavior of~\eqref{eqn:crossfit-sutva} in the finite population setting where the only randomization comes from the treatment assignment, and~\citet{wager2016high} provide asymptotic results for estimating the population average treatment effect.  As long as the predicted value $\hat \mu_{-i}^{(w)}$ does not use the $i$-th observation, estimator~\eqref{eqn:crossfit-sutva} allows us to obtain asymptotically unbiased adjustments and valid inference using machine learning algorithms such as random forests or neural networks.  In practice, such predictions are obtained by a cross validation-style procedure in which the data are split into $K$ folds, and the predictions for each fold $k$ are obtained using a model fitted on data from the other $K-1$ folds.  (Cross validation on graphs is in general difficult~\citep{chen2018network,li2018network}, but our procedure is unrelated to that problem because the features are constructed from the entire graph and fixed beforehand.) 

In this section we apply insights from the above works to the interference setting.  Under Model~\ref{model:nonparametric}, the global average treatment effect has the form
\[\tau = \avgin \left[\mu^{(1)}(X_i^{(\+1)}) - \mu^{(0)}(X_i^{(\+0)})\right].\]
To develop an estimator of $\tau$, consider the form of the OLS estimator given by equation~\eqref{eqn:ols-estimator-with-int}, which can be rewritten as
\begin{align}
\hat \tau &= \bar y_1 - \bar y_0 + (\omega_1 - \bar X_1)^\top \hat \eta_1 - (\omega_0 - \bar X_0)^\top \hat \eta_0 \nonumber\\
&= \omega_1^\top \hat \eta_1 - \omega_0^\top \hat \eta_0 + (\bar y_1 - \bar X_1^\top \hat \eta_1) - (\bar y_0 - \bar X_0^\top \hat \eta_0) \nonumber\\
&= \avgin \left((X_i^{(\+1)})^\top  \hat \eta_1 - (X_i^{(\+0)})^\top \hat \eta_0\right) + \frac{1}{N_1} \sumin W_i \left(Y_i - X_i^\top \hat \eta_1\right) - \frac{1}{N_0} \sumin (1 - W_i) \left(Y_i - X_i^\top \hat \eta_0\right). \label{eqn:ols-crossfit-motivator}
\end{align}
Now, by analog, we define the estimator for the nonparametric setting as
\begin{equation}
\label{eqn:tauhat-nonlinear}
\hat \tau = \avgin \left(\hat \mu_{-i}^{(1)}(X_i^{(\+1)}) - \hat \mu_{-i}^{(0)}(X_i^{(\+0)})\right) + \frac{1}{N_1} \sumin W_i\left(Y_i - \hat \mu_{-i}^{(1)}(X_i)\right) - \frac{1}{N_0} \sumin (1 - W_i)\left(Y_i - \hat \mu_{-i}^{(0)}(X_i)\right).
\end{equation}
One sees that equations~\eqref{eqn:ols-crossfit-motivator} and~\eqref{eqn:tauhat-nonlinear} agree whenever $\hat \mu^{(w)}(x) = \hat \alpha_w + x^\top \hat \eta_w$.  Furthermore, equation~\eqref{eqn:tauhat-nonlinear} is equal to its SUTVA version, equation~\eqref{eqn:crossfit-sutva}, whenever $X_i^{(\+0)} = X_i^{(\+1)} = X_i$.

Because the units can be arbitrarily connected, the cross-fitting component partitions are not immediately guaranteed to be exactly independent, and so any theoretical guarantees must assume some form of approximate independence of the out-of-sample predictions.  In this work we leave such theoretical results open for future work; our primary contribution is the proposal of estimator~\eqref{eqn:tauhat-nonlinear} and a bootstrap variance estimation method that respects the empirical structure of interference.

\subsection{Bootstrap variance estimation}
\label{sec:nonparametric-varest}
Here we discuss a method for placing error bars on the estimate $\hat \tau$ defined in equation~\eqref{eqn:tauhat-nonlinear}.  
We propose using a bootstrap estimator to estimate the sampling variance.  Under exogeneity (Assumption~\ref{asm:errors}), the \hl{features} and residuals contribute orthogonally to the total variance, and so the model and residuals can be resampled separately.  

Instead of using the fixed, observed $X_1, \dots X_n$ as in a standard residual bootstrap, we propose capturing the entire variance induced by the \hl{feature} distribution by sampling a new $X_i$ from its population distribution for each bootstrap replicate.  That is, for each of $B$ bootstrap repetitions, we sample a new treatment vector $\+W^b$ and compute bootstrapped \hl{features} $X_i^b = x_i(\Wni^b)$.  The means are then computed using the fitted function as $\hat \mu_{-i}^{(0)}(X_i^b)$ and $\hat \mu_{-i}^{(1)}(X_i^b)$.  Provided that the adjustments are consistent in the sup norm sense, that is, that
\[\sup_x |\hat \mu^{(0)}(x) - \mu^{(0)}(x)| \pto 0, \qquad \sup_x |\hat \mu^{(1)}(x) - \mu^{(1)}(x)| \pto 0,\]
then $\hat \mu^{(0)}(\cdot)$, $\hat \mu^{(1)}(\cdot)$ serve as appropriate stand-ins for $\mu^{(0)}(\cdot)$, $\mu^{(1)}(\cdot)$ in large samples.

For the residual portion, we take the initial fitting functions $\hat \mu_{-i}^{(0)}(\cdot)$ and $\hat \mu_{-i}^{(1)}(\cdot)$ and compute the residuals
\[\hat \ep_i = Y_i - W_i \hat \mu_{-i}^{(1)}(X_i) - (1 - W_i) \hat \mu_{-i}^{(0)}(X_i).\] 
Under an assumption of independent errors, it is appropriate to compute bootstrap residuals $\ep_1^b, \dots, \ep_n^b$ by sampling with replacement from the observed residuals $\hat \ep_1, \dots, \hat \ep_n$.
We can then construct an artificial bootstrap response
\[Y_i^b = W_i^b \hat \mu_{-i}^{(1)}(X_i^b) + (1 - W_i^b) \hat \mu_{-i}^{(0)}(X_i^b) + \ep_i^b.\]
We then compute $\hat \tau^b$ using data $(Y_i^b, X_i^b, W_i^b)$, and then take the bootstrap distribution $\{\hat \tau^b\}_{b=1}^B$ as an approximation to the true distribution of $\hat \tau$.  To construct a $1 - \alpha$ confidence interval, one can calculate the endpoints using approximate Gaussian quantiles,
\[\hat \tau \pm z_{\alpha / 2} \sqrt{\var(\hat \tau_b)}.\]
Alternatively, one may use the
$\alpha / 2$ and $1 - \alpha / 2$ quantiles of the empirical bootstrap distribution (a percentile bootstrap), which is preferable if the distribution of $\hat \tau$ is skewed.

% \note{Provide a proof, then erase the previous variance nonsense...}

\hl{We wish to emphasize that the main insight here is that exogeneity allows the feature and residual variances to be handled separately, and that the feature variance can be computed from the design, however complicated the structure of $X_i$ itself may be.  The bootstrap residuals $\ep_i^b$ as described above rely on independent errors, but in fact the practitioner is free to utilize the entirety of the rich bootstrap literature stemming from~\citet{efron1979bootstrap} in the event that this independence assumption is violated.  For example, one may use versions of the block bootstrap~\citep{kunsch1989jackknife} to try and protect against correlated errors.  One can use more complicated bootstrap methods to be more faithful to the empirical distribution, such as incorporating higher-order features of the distribution via bias-corrected and accelerated (BCa) intervals~\citep{efron1987better}, or handling heteroscedasticity via the wild bootstrap~\citep{wu1986jackknife}.}

\section{Simulations}
\label{sec:simulations}
This section is devoted to running a number of simulation experiments.  Our goals in  these simulations are to (a) verify that our adjustment estimators and variance estimates are behaving as intended, (b) compare the performance of our proposed estimators to that of existing inverse propensity weighted estimators based on exposure models, and (c) empirically explore the behavior of our estimators in regimes of mild model misspecification.

\subsection{Simulation setup and review of exposure modeling}
\label{sec:sim-hajek-description}

For the network $G$ we use a subset of empirical social networks from the \texttt{facebook100} dataset, an assortment of complete online friendship networks for one hundred colleges and universities collected from a single-day snapshot of Facebook in September 2005.  A detailed analysis of the social structure of these networks was given in~\citet{traud2011comparing,traud2012social}. 
We use an empirical network rather than an instance of a random graph model in order to replicate as closely as possible the structural characteristics observed in real-world networks.  We use the largest connected components of the Caltech and Stanford networks.  Some summary statistics for the networks are given in Table~\ref{table:fb-summary}.

\begin{table}[ht]
\centering
\begin{tabular}{r|rr}
\toprule 
network & Caltech & Stanford \\
\midrule
number of nodes & 762 & 11586 \\
number of edges & 16651 & 568309 \\
diameter & 6 & 9 \\
average pairwise distance & 2.33 & 2.82 \\
\bottomrule
\end{tabular}
\caption{Summary statistics for the \texttt{facebook100} networks.}
\label{table:fb-summary}
\end{table}

In all simulation regimes we compare our regression estimators to two other estimators, which we describe now.  As a baseline we use the SUTVA difference-in-means estimator
\[\DM = \frac{1}{N_1} \sumin W_i Y_i - \frac{1}{N_0} \sumin (1 - W_i)Y_i.\]

\subsubsection{Exposure modeling IPW estimators}
We also compare to an inverse propensity weighted estimator derived from a local neighborhood exposure model.  We now briefly describe the exposure model-based estimators framed in the language of \emph{constant treatment response} assumptions~\citep{manski2013identification}.   For $Y_i(\+w) = \mu_i(\+w) + \ep_i$, this approach partitions the space of treatments $\cW$ into classes of treatments that map to the same mean response $\mu_i(\cdot)$ for unit $i$.  The partition function is assumed known, and is called an \emph{exposure function}.  The no-interference portion of SUTVA can be specified as an exposure model, since no-interference is equivalent to the requirement that $\mu_i(\+w_1) = \mu_i(\+w_2)$ for any two treatment vectors $\+w_1, \+w_2 \in \cW$ in which the $i$-th components of $\+w_1$ and $\+w_2$ agree.  \citet{manski2013identification} refers to this formulation as \emph{individualistic treatment response} (ITR).

The exposure model most commonly used for local interference is the \emph{neighborhood treatment response} (NTR) assumption, which given a graph $G$, posits that $\mu_i(\+w_1) = \mu_i(\+w_2)$ whenever $\+w_1$ and $\+w_2$ agree in all components $j$ such that $j \in \cN_i \cup \{i\}$.  In other words, NTR assumes that $Y_i$ depends on unit $i$'s own treatment and possibly any other unit in its neighborhood $\cN_i$, but that it does not respond to changes in the treatments of any units outside of its immediate neighborhood.  For the purposes of estimating the global treatment effect, one may use \emph{fractional $q$-NTR}, where given a threshold parameter $q \in (0.5, 1]$, $q$-NTR assumes that a unit is effectively in global treatment if at least a fraction $q$ of its neighbors are assigned to treatment, and similarly for global control.  NTR is thus a graph analog of partial interference for groups and $q$-NTR is a corresponding version of stratified interference.  The threshold $q$ is a tuning parameter; larger values of $q$ result in less bias due to interference, but greater variance because there are fewer units available for estimation.  \citet{eckles2017design} provide some theoretical results for characterizing the amount of bias reduction.  There is not much guidance for selecting $q$ to manage this bias-variance tradeoff; \citet{eckles2017design} uses $q = 0.75$.

\citet{aronow2017estimating} study the behavior of inverse propensity weighted (IPW) estimators based on a well-specified exposure model.  Toward this end, let 
\begin{align*}
E_i^{(\+1)} &= \one\left\{\frac{1}{d_i} \sum_{j \in \cN_i} W_j \geq q\right\} \\
E_i^{(\+0)} &= \one\left\{\frac{1}{d_i} \sum_{j \in \cN_i} W_j \leq 1 - q \right\}
\end{align*}
be the events that unit $i$ is $q$-NTR exposed to global treatment and $q$-NTR exposed to global control, respectively.  Let their expectations be denoted by
\[\pi_i^{(\+1)} = \E(E_i^{(\+1)}), \qquad \pi_i^{(\+0)} = \E(E_i^{(\+0)}),\]
which represent the \emph{propensity scores} for unit $i$ being exposed to the global potential outcome conditions.
Then the inverse propensity weighted estimators under consideration are defined as
\begin{align}
\HT &= \frac{1}{n} \sumin \left[\frac{E_i^{(\+1)} Y_i}{\pi_i^{(\+1)}} - \frac{E_i^{(\+0)} Y_i}{\pi_i^{(\+0)}}\right] \nonumber \\
\hajek &= \left(\sumin \frac{E_i^{(\+1)}}{\pi_i^{(\+1)}}\right)^{-1}\sumin \frac{E_i^{(\+1)} Y_i}{\pi_i^{(\+1)}} - \left(\sumin \frac{E_i^{(\+0)}}{\pi_i^{(\+0)}}\right)^{-1}\sumin\frac{E_i^{(\+0)} Y_i}{\pi_i^{(\+0)}} \label{eqn:hajek}
\end{align}

The estimator $\HT$ is the Horvitz-Thompson estimator~\citep{horvitz1952generalization}, and $\hajek$ is the H\'ajek estimator~\citep{hajek1971comment}; these names stem from the survey sampling literature and are commonly used in the interference literature.  In the importance sampling and off-policy evaluation literatures, analogs of $\HT$ and $\hajek$ are known as unnormalized and self-normalized importance sampling estimators, respectively.  In the finite potential outcomes framework The Horvitz-Thompson estimator is unbiased under the experimental design distribution, but suffers from excessive variance when the probabilities of global exposure are small, as is usually the case.  The H\'ajek estimator, which forces the weights to sum to one and is thus interpretable as a difference of weighted within-group means, incurs a small amount of finite sample bias but is asymptotically unbiased, and is nearly always preferable to $\HT$.  For our simulations we will therefore avoid using $\HT$.

One of the main insights in the exposure modeling framework developed by~\citet{aronow2017estimating} is that even if the initial treatment assignment probability $\pi$ is constant across units, the global treatment propensity scores need not be; indeed, $\pi_i^{(\+1)}$ and $\pi_i^{(\+0)}$ depend on the network structure and choice of exposure model.  Therefore inverse propensity weighting is needed to produce unbiased (or consistent) estimators for contrasts between exposures even in a Bernoulli randomized design.  

Given a design and a (simple enough) exposure model, the propensities can be calculated exactly.  If the treatments are assigned according to independent Bernoulli coin flips, the exact exposure probabilities are expressed straightforwardly using the binomial distribution function.  That is, for treatment probability $\pi = \P(W_i = 1)$ and degree $d_i$, the probability of unit $i$ being $q$-NTR exposed to global treatment is
\begin{equation}
\label{eqn:propensity1}
\pi_i^{(\+1)} = \pi (1 - F_{d_i, \pi}(\lfloor d_iq \rfloor)),
\end{equation}
where 
\[F_{n,p}(k) = \sum_{j=0}^k \binom{n}{j} p^j(1 - p)^{n-j}\]
is the distribution function of a Binomial$(n, p)$ random variable.
Similarly, the probability that unit $i$ is $q$-NTR exposed to global control is
\begin{equation}
\label{eqn:propensity0}
\pi_i^{(\+0)} = (1 - \pi) F_{d_i, \pi}(\lfloor d_i(1 - q) \rfloor).
\end{equation}
In a cluster randomized design, exposure probabilities for fractional neighborhood exposure can be computed using a dynamic program~\citep{ugander2013graph}.  

A further comment on the propensity scores $\pi_i^{(\+1)}$ and $\pi_i^{(\+0)}$ is necessary.  Importantly, these propensity scores are exact only to the extent to which the exposure model is correct.  Thus, when the exposure model is unknown, these propensities scores should be viewed as \emph{estimated} propensities, in which case even small estimation errors in the propensities can lead to large estimation errors in their inverses.  It is therefore the case that $\HT$ and $\hajek$ can suffer from the same high-variance problems as IPW estimators based on a fitted propensity model used in observational studies, even if the exposure model is only mildly misspecified.

%In contrast to IPW estimators used for debiasing in the SUTVA observational setting, the observations are not only reweighted but also selected, resulting in high variance.

In our simulations we use the H\'ajek estimator, $\hajek$, defined by equation~\eqref{eqn:hajek} and the $q$-NTR exposure probabilities~\eqref{eqn:propensity1} and~\eqref{eqn:propensity0}.   We fix $q = 0.75$, which is the same threshold used in~\citet{eckles2017design}.  For the other values of $q$ that we tried, performance was roughly on par with or worse than $q = 0.75$.

\subsection{Variance estimates in a linear model}
\label{sec:sim-basic}
We first run a basic simulation in which we compute estimates, variances and variance estimates in an ordinary linear model.  We consider two \hl{features},
\[X_{1,i} = \frac{1}{d_i} \sum_{j \in \cN_i} W_j,\]
the proportion of treated neighbors, and 
\[X_{2,i} = \sum_{j \in \cN_i} W_j,\]
the number of treated neighbors.  It is conceivable that $Y_i$ may depend on both of these features.   
Let the data-generating process for $Y_i$ be as in Model~\ref{model:linear}; that is, the mean function for $Y_i$ is linear in $X_i = (X_{1,i}, X_{2,i})$, given parameters $\alpha_w \in \R$ and $\beta_w = (\beta_{w,1}, \beta_{w,2}) \in \R^2$ for $w = 0, 1$.  We simulate $\ep_i \sim N(0, \sigma^2)$.  

Let $\bar d = n^{-1} \sumin d_i$ be the average degree of $G$.  Then the true global treatment effect is
\[\alpha_1 - \alpha_0 + \beta_{1,1} + \bar d\beta_{1,2}.\]

We fix $\alpha_1 = 1$ and $\alpha_0 = 0$, so that the direct effect is $1$.  We fix the noise variance at $\sigma^2 = 1$.  We vary the ``proportion'' coordinate of $\beta_0$ in $\{0, 0.1\}$, the ``number'' coordinate of $\beta_0$ in $\{0, 0.01\}$, the \hl{``proportion'' coordinate of $\beta_1$} in $\{0, 0.2\}$, and the  ``number'' coordinate of $\beta_1$ in $\{0, 0.05\}$, giving $16$ total parameter configurations.  SUTVA holds when $\beta_0 = \beta_1 = 0$.

We use equation~\eqref{eqn:ols-varest} to estimate the variance of the adjusted estimator, using $200$ bootstrap samples from the \hl{feature} distribution to calculate the inverse covariance matrices.  We also compute the difference-in-means (DM) estimator for comparison purposes, for which we use the standard Neyman conservative variance estimate
\[\frac{S_0^2}{N_0} + \frac{S_1^2}{N_1},\]
where $S_0^2$ and $S_1^2$ are the within-group sample variances.  We compute confidence intervals based on Gaussian quantiles for a 90\% nominal coverage rate.

We then run $1000$ simulated experiments, sampling a new treatment vector $\+W$ and computing the two estimators each time.  The results are shown in Table~\ref{table:basic-sim}.  The bias of the DM estimator increases with greater departures from SUTVA, and confidence intervals for that estimator are only theoretically valid under SUTVA (the first row in Table~\ref{table:basic-sim}).  Otherwise, the confidence intervals are anticonservative, both due to bias of the DM estimator and due to invalidity of the Neyman variance estimate, which assumes fixed potential outcomes.  On the other hand, the adjustment estimator is unbiased and has valid coverage for all parameter configurations.

% latex table generated in R 3.4.4 by xtable 1.8-2 package
% Tue Jun  5 11:48:09 2018
\begin{table}[ht]
\centering
\begin{tabular}{rrr|rr|rr|rr|rr}
  \toprule
  \multicolumn{3}{c|}{Parameters} & \multicolumn{2}{c|}{Bias} & \multicolumn{2}{c|}{SE} & \multicolumn{2}{c|}{SE Ratio} & \multicolumn{2}{c}{Coverage rate}\\
$\beta_0$ & $\beta_1$ & $\tau$ & DM & adj & DM & adj & DM & adj & DM & adj \\ 
  \midrule
(0, 0) & (0, 0) &  1 & 0.007 & -0.013 & 0.074 & 1.149 & 0.982 & 1.031 & \textbf{0.891} & \textbf{0.913} \\ 
  (0, 0.01) & (0, 0) &  1 & -0.006 & -0.028 & 0.072 & 1.189 & 1.004 & 0.996 & \textbf{0.899} & \textbf{0.901} \\ 
  (0.1, 0) & (0, 0) &  1 & -0.053 & 0.027 & 0.072 & 1.151 & 1.004 & 1.029 & 0.808 & \textbf{0.919} \\ 
  (0.1, 0.01) & (0, 0) &  1 & -0.052 & 0.017 & 0.074 & 1.155 & 0.973 & 1.025 & 0.801 & \textbf{0.906} \\ 
  (0, 0) & (0, 0.05) &  1.05 & -0.025 & 0.005 & 0.073 & 1.211 & 0.990 & 0.976 & 0.866 & \textbf{0.882} \\ 
  (0, 0.01) & (0, 0.05) &  1.05 & -0.026 & 0.005 & 0.070 & 1.173 & 1.036 & 1.010 & \textbf{0.894} & \textbf{0.909} \\ 
  (0.1, 0) & (0, 0.05) &  1.05 & -0.075 & 0.058 & 0.075 & 1.232 & 0.960 & 0.961 & 0.707 & \textbf{0.884} \\ 
  (0.1, 0.01) & (0, 0.05) &  1.05 & -0.078 & -0.019 & 0.073 & 1.151 & 0.996 & 1.031 & 0.699 & \textbf{0.912} \\ 
  (0, 0) & (0.2, 0) &  1.2 & -0.097 & 0.058 & 0.073 & 1.168 & 0.993 & 1.014 & 0.615 & \textbf{0.910} \\ 
  (0, 0.01) & (0.2, 0) &  1.2 & -0.104 & 0.007 & 0.073 & 1.142 & 0.999 & 1.039 & 0.577 & \textbf{0.916} \\ 
  (0.1, 0) & (0.2, 0) &  1.2 & -0.151 & 0.002 & 0.073 & 1.197 & 0.991 & 0.988 & 0.334 & \textbf{0.892} \\ 
  (0.1, 0.01) & (0.2, 0) &  1.2 & -0.152 & 0.044 & 0.072 & 1.208 & 1.014 & 0.981 & 0.315 & \textbf{0.894} \\ 
  (0, 0) & (0.2, 0.05) &  1.25 & -0.125 & -0.054 & 0.072 & 1.154 & 1.003 & 1.025 & 0.476 & \textbf{0.908} \\ 
  (0, 0.01) & (0.2, 0.05) &  1.25 & -0.130 & -0.014 & 0.070 & 1.149 & 1.029 & 1.031 & 0.446 & \textbf{0.920} \\ 
  (0.1, 0) & (0.2, 0.05) &  1.25 & -0.174 & 0.016 & 0.074 & 1.206 & 0.983 & 0.982 & 0.232 & \textbf{0.894} \\ 
  (0.1, 0.01) & (0.2, 0.05) &  1.25 & -0.182 & -0.014 & 0.074 & 1.172 & 0.985 & 1.012 & 0.194 & \textbf{0.903} \\ 
   \bottomrule
\end{tabular}
\caption{Results of the basic simulation setup from Section~\ref{sec:sim-basic}, showing bias, true standard error, ratio of estimated standard error to true standard error, and coverage rate of 90\% nominal Gaussian confidence interval.  Coverage rates which fall within a 99\% one-sided interval of the nominal coverage rate (that is, coverage rates above $0.9 - 2.326 \sqrt{0.9 \times 0.1 / 1000} \approx 0.878$) are \textbf{bolded}.}
\label{table:basic-sim}
\end{table}

\subsection{Estimator weights}
\label{sec:sim-weights}
Both the OLS adjustment estimator and the H\'ajek estimator are linear reweighting estimators.  The OLS weights are given by equations~\eqref{eqn:ols-weight0} and~\eqref{eqn:ols-weight1}, and the H\'ajek weights are implied by the definition of the H\'ajek estimator in equation~\eqref{eqn:hajek}.  Both depend on only the network structure, treatment assignment, and exposure model or choice of \hl{features}, but not on the realized outcome variable.  The weights for a single Bernoulli$(0.5)$ draw of the treatment vector $\+W$ for the Caltech graph are displayed in Figure~\ref{fig:weights}, assuming that the H\'ajek estimator is to be constructed under the $q$-NTR exposure condition for $q = 0.75$, and the OLS estimator uses the fraction of treated neighbors as the only adjustment variable.  %\note{Plot graph cluster weights} 
We see that the H\'ajek estimator trusts a few select observations to be representative of the global exposure conditions.  A graph cluster randomized design would increase the number of units used in the H\'ajek estimator.  The OLS estimator, on the other hand, gives all units non-zero weight.  Some units that are in the treatment group but are surrounded by control individuals are treated as diagnostic for the control mean and vice versa, which is a reasonable thing to do if the linear model is true.

\begin{figure}[ht]
\centering
\includegraphics[width=0.8\textwidth]{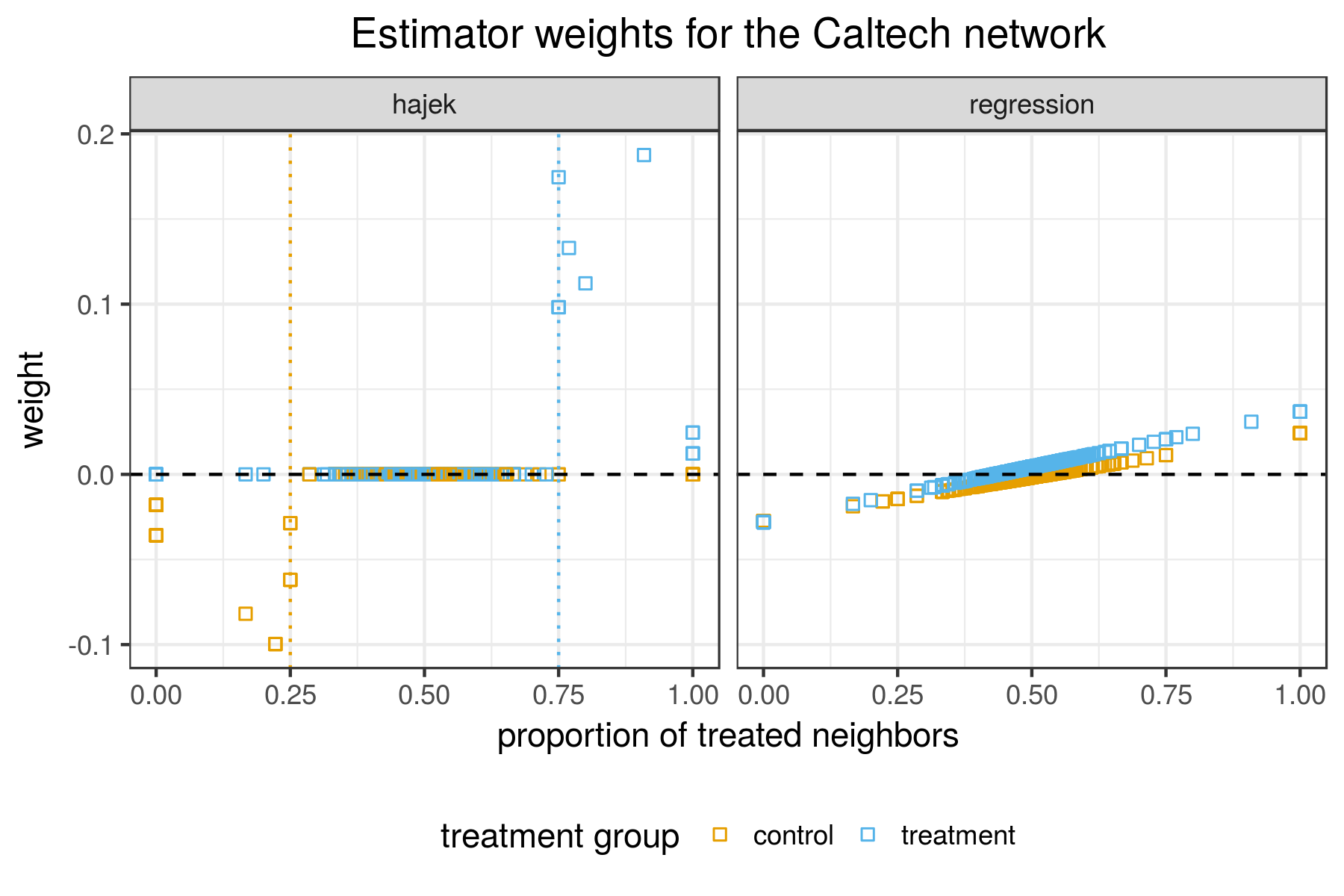}
\caption{Estimator weights for the case where the only \hl{feature} is the proportion of treated neighbors.  (left) The H\'ajek estimator selects a few individuals from treatment and control and takes a weighted average of those individuals with weights determined by exposure probabilities.  Vertical dotted lines are the thresholds used for selecting observations.  (right) The regression estimator takes a more democratic approach, giving all units non-zero weight.}
\label{fig:weights}
\end{figure}

\subsection{Dynamic linear-in-means}

Here we replicate portions of the simulation experiments conducted by~\citet{eckles2017design}.  That paper uses a discrete-time dynamic model, which can be viewed as a noisy best-response model~\citep{blume1995statistical}, in which individuals observe and respond to the behaviors of their peers, using that information to guide their actions in the following time period.  Given responses $Y_{i,t-1}$ for time period $t-1$, let 
\[Z_{i,t-1} = \frac{1}{d_i} \sum_{j \in \cN_i} Y_{i,t-1}, \]
a time-varying version of $Z_i$ defined in equation~\eqref{eqn:endogenous-mean}, which represents the average behavior of unit $i$'s neighbors at time $t - 1$. 
Then we model
\begin{equation}
\label{eqn:dynamic-lim}
Y_{i,t} = \alpha + \beta W_i + \gamma \hl{Z_{i,t-1}} + \ep_{i,t}.
%Y_{i,t}^* = \alpha + \beta W_i + \gamma Z_{i,t} + \ep_{i,t}, \qquad Y_{i,t} = g(Y_{i,t}^*).
\end{equation}
The noise is taken to be $\ep_{i,t} \sim N(0, \sigma^2)$, which is independent and homoscedastic across time and individuals.  
%The function $g(\cdot)$ is an ``activation function'' that transforms the response after each time step.  \citet{eckles2017design} uses $g(y) = \one(y > 0)$, in which case \eqref{eqn:dynamic-lim} is a probit model and $Y$ is a binary outcome variable.  Here we also consider the non-thresholded case, $g(y) = y$, which is closer to the original linear-in-means model specified by~\cite{manski1993identification}.  
\citet{eckles2017design} add an additional thresholding step that transforms equation~\eqref{eqn:dynamic-lim} into a probit model and $Y$ into a binary outcome variable, but here we study the non-thresholded case which is closer to the original linear-in-means model specified by~\cite{manski1993identification}.  
Starting from initial values $Y_{i,0} = 0$, the process is run up to a maximum time $T$ and then the final outcomes are taken to be $Y_i = Y_{i,T}$.  The choice of $T$, along with the strength of the spillover effect $\gamma$, governs the amount of interference.  If $T$ is larger than the diameter of the graph, then the interference pattern is fully dense, and no exposure model holds.

We construct two different adjustment variables.  First, let
\[X_{1,i} = \frac{1}{d_i} \sum_{j \in \cN_i} W_j,\]
the proportion of treated neighbors.
Now let
\[\cN_i^{(2)} = \{k  \in [n] \setminus \{i\} : \text{there exists } j \text{ such that } A_{ij} A_{jk} = 1\}\]
be the two-step neighborhood of unit $i$.  Then define
\[X_{2,i} = \frac{1}{|\cN_i^{(2)}|} \sum_{k \in \cN_i^{(2)}} A_{ij}A_{jk} W_k,\]
the proportion of individuals belonging to $\cN_i^{(2)}$ who are treated.  (Note that unit $i$ itself does \emph{not} belong to its own two-step neighborhood.)

%Here we study the case where $g(y) = y$ in equation~\eqref{eqn:dynamic-lim}.  
We use a small-world network~\citep{watts1998collective}, which is the random graph model used in the simulations by~\citet{eckles2017design}, with $n = 1000$ vertices, initial neighborhood size $10$, and rewiring probability $0.1$.  We also run our simulation on the empirical Caltech network.

As in~\citet{eckles2017design}, we compute the ``true'' global treatment effects by Monte Carlo simulation.  For every parameter configuration we sample 5000 instances of the response vector under global exposure to treatment $\+W = \+1$, and 5000 instances of the response vector under global exposure to control $\+W = \+0$, and then average the resulting difference in response means.  For the response model, we fix the intercept at $\alpha = 0$ and the direct effect at $\beta = 1$.  We vary the spillover effect $\gamma \in \{0, 0.25, 0.5, 0.75, 1\}$ and the maximum number of time steps $T \in \{2, 4\}$.  Larger values of $\gamma$ and $T$ indicate more interference.  We also use two different levels for the noise standard deviation, $\sigma \in \{1, 3\}$.

We consider two versions of the linear adjustment estimator defined in equation~\eqref{eqn:linear-tau-hat}, one that adjusts for $X_{1,i}$ only, and one that adjusts for both $X_{1,i}$ and $X_{2,i}$.  The first model adjusts for one-step neighborhood information, whereas the second model adjusts for both one- and two-step neighborhood information.  We compare to the difference-in-means estimator and the H\'ajek estimator with $q = 0.75$ fractional NTR exposure.

We emphasize that the all of the estimators that we consider are misspecified under the data generating process that we use in this simulation.  For $T \geq 2$, local neighborhood exposure fails, so the propensity scores used in the H\'ajek estimator do not align with the true propensity scores.  Our adjustment estimators are also misspecified for $T \geq 2$; not only is the linear model misspecified, but the residuals are neither independent nor exogenous, violating Assumption~\ref{asm:errors}.

The results are displayed in Figure~\ref{fig:lim-results}.  We see that the two OLS adjustment estimators are uniformly better at bias reduction than the H\'ajek estimator.  The two-step adjustment is nearly unbiased even though it is misspecified, even in the presence of strong spillover effects.  This is because interference is dissipating exponentially, so that units don't really respond to the behavior of individuals that are distance $3$ or $4$ away.  The two-step adjustment has higher variance than the one-step adjustment because it involves fitting a more complex model.  Furthermore, estimators appear to have more trouble handling the real-world network structure of the Caltech network, compared to the artificial small-world network.

\begin{figure}[!ht]
\centering
\includegraphics[width=\textwidth]{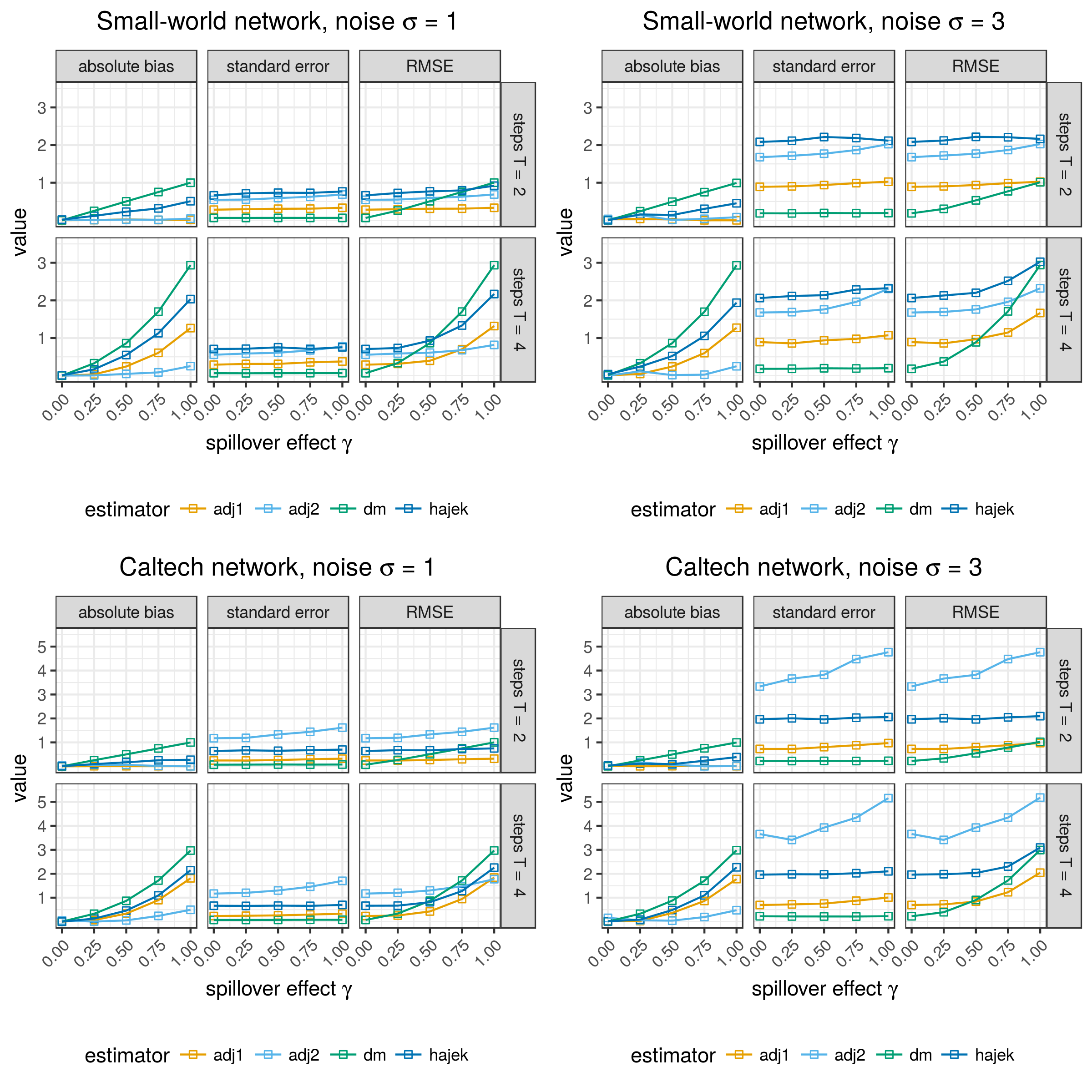}
\caption{Results for linear-in-means simulation.  \texttt{dm} is the difference-in-means estimator, \texttt{hajek} is the H\'ajek estimator, \texttt{adj1} is adjustment based on a one-step neighborhood, and \texttt{adj2} is adjustment based on a two-step neighborhood.}
\label{fig:lim-results}
\end{figure}

The difference-in-means estimator outperforms the adjustment estimators in regimes of weak interference, which is expected since difference-in-means is the best that can be done under correct specification of SUTVA.  In terms of RMSE the H\'ajek estimator sometimes outperforms the two-step adjustment estimator because of large variance.  However, if the main goal is robustness to interference, then unconfounded estimation coupled with valid confidence intervals is likely the priority over optimizing an error metric such as RMSE.  In this case, since the H\'ajek estimator neither achieve sufficient bias reduction nor provide correct coverage, it has no real advantage over the adjustment estimators.

Figure~\ref{fig:lim-coverage} displays the coverage rates obtain from variance estimates using equation~\eqref{eqn:ols-varest} under the dynamic treatment response setup.  The coverage is not always correct due to misspecification, especially for \texttt{adj1}.  We see that coverage rates for \texttt{adj2} are often conservative even though it too is misspecified.  We note that standard variance estimators for the difference-in-means estimator and those derived in~\cite{aronow2017estimating} for the H\'ajek estimator also would fail here because they rely on correct specification of SUTVA and an exposure model, respectively.  In short, we are plagued with the same difficulties that beset attempting to do valid inference in observational studies when we do not know whether unconfoundedness holds.

\begin{figure}[!ht]
\centering
\includegraphics[width=\textwidth]{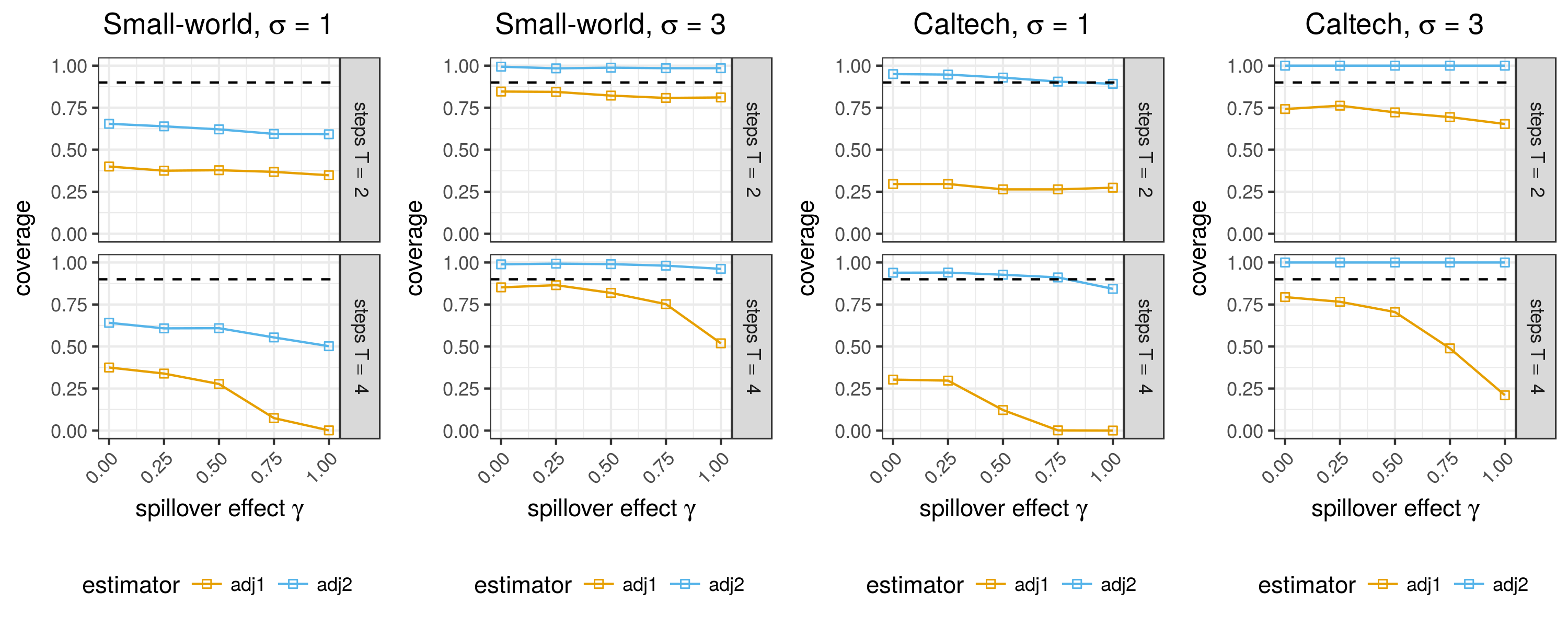}
\caption{Coverage rates for 90\% nominal interval.}
\label{fig:lim-coverage}
\end{figure}

%\note{todo??}
%We also compare to the H\'ajek estimator in a graph cluster randomized design.  As in~\citet{ugander2013graph,eckles2017design}, we obtain the clusters by using an $\ep$-net clustering of the graph, which first selects a ``net'' of focal nodes over the network such that no two focal nodes are closer than $\ep$ in graph distance, and then assigns each node to the cluster determined by its closest focal node.  We use $\ep = 2$.  Sometimes the clustering resulted in one or two large clusters.  We ran several instances of the $\ep$-net algorithm to find a clustering that looked reasonable.

% \subsubsection{Binary case}
% \label{sec:sim-lim-binary}

% In this example we study outcomes generated according to equation~\eqref{eqn:dynamic-lim} where $g(y) = \one(y \geq 0)$.  The response is binary, so in some cases the OLS estimator may produce estimates outside the support of interest.  Here we use the estimator defined in equation~\eqref{eqn:tauhat-nonlinear}.  
% We use logistic regression to impute the outcomes, with $K = 3$ folds.  For each fold we train a separate logistic regression within the treatment and control groups, producing $6$ models in total.  These models are then used to predict the ``out-of-sample'' outcomes, from which we construct estimator~\eqref{eqn:tauhat-nonlinear}.

% We focus on a small-world network of size $n = 5000$.  We found the larger population size was necessary to obtain stable maximum likelihood estimates for the logistic regression within each fold.

\subsection{Average + aggregate peer effects}
In this example we consider a response model in which individuals respond partially to the \emph{average} behavior of their peers and partially to the \emph{aggregate} behavior of their peers.  Let
\[X_i^{\text{frac}} = \frac{1}{d_i} \sum_{j \in \cN_i} W_j\]
be the fraction of treated neighbors and
\[X_i^{\text{num}} = \sum_{j \in \cN_i} W_j\]
be the number of treated neighbors.  $X_i^{\text{frac}}$ captures a notion of fractional neighborhood exposure and $X_i^{\text{num}}$ captures a notion of absolute neighborhood exposure.  It seems reasonable that both of these features may contribute interference.  In order to use an exposure model estimator one would need to focus on either fractional exposure or absolute exposure, or otherwise define a more complicated exposure model, but our adjustments easily handle both features.

We consider the following response function:
\[Y_i = -5 + 2 (2 + E_i) W_i + 0.03X_i^{\text{frac}} + \frac{1}{1 + 0.001 e^{-0.03 (X_i^{\text{num}} - 300)}} + \frac{10}{3 + e^{-8(X_i^{\text{frac}} - 0.4)}} + \ep_i,\]
where $E_i \sim N(0, 2)$ introduces heterogeneity into the direct effect and $\ep_i \sim N(0, 1)$ is homoscedastic noise.
This function captures a possible way in which individuals could respond nonlinearly to their peer exposures through $X_i^{\text{frac}}$ and $X_i^{\text{num}}$.  Figure~\ref{fig:nonlinear-response} plots a single draw of this response on individuals from the Stanford network.  The continuous response exhibits a logistic dependence on both \hl{features}.  We see that individuals with less than half of their neighbors exposed to the treatment condition experience a steadily increasing peer effect as the proportion of treated neighbors increases.  For individuals with more than half of their neighbors exposed to the treatment condition, the effect is nearly constant across values of $X_i^{\text{frac}}$, capturing the idea that after a certain threshold observing additional peer exposures doesn't add much.  For $X_i^{\text{num}}$, we see that a small number of treated neighbors essentially contributes no interference, but once a large number of neighbors are exposed to treatment this has a measurable impact on the response.  We also see that there is a noticeable bump around $X_i^{\text{frac}} = 0.5$; this is because individuals with peers nearly equally assigned to the two groups are more likely to have high degree.  The model extends the idea of neighborhood exposure to capture the intuition that having a high proportion of treated neighbors is evidence for being subject to interference, but such evidence is stronger when the individual in question has many friends and not just one or two friends.  The true treatment effect is $\tau = 6.336$, which was computed using 2000 Monte Carlo draws each of global treatment and global control.

\begin{figure}[!ht]
\centering
\includegraphics[width=\textwidth]{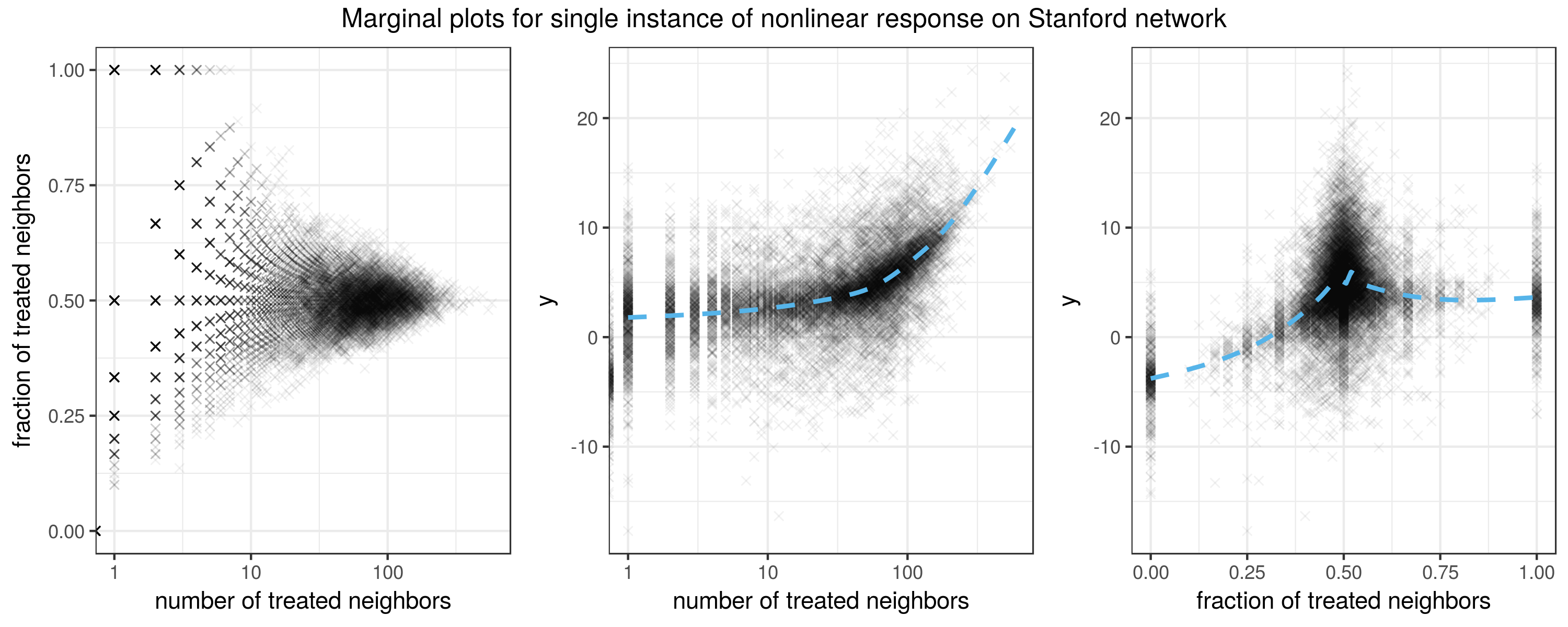}
\caption{One draw of the \hl{features} and response for the nonlinear setup. The left panel shows the relationship between the two \hl{features}, and the right two panels show the relationship of the response with each \hl{covariate}.  The horizontal axis for ``number of treated neighbors'' ($X_i^{\text{num}}$) is on a logarithmic scale.  A local linear regression, for exploratory purposes, is plotted in blue.}
\label{fig:nonlinear-response}
\end{figure}

In our experience larger populations seem to be needed for fitting the more complex, nonlinear functions, so we work with the Stanford network which has 11586 nodes.  We predict the response surfaces using a generalized additive model (GAM)~\citep{hastie1986generalized}, which is easy and fast to fit in \texttt{R}, but other methods such as local regression or random forests could of course be used instead.  We split the dataset into $K = 2$ folds, and within each fold, train a GAM separately in the treatment and control groups for a total of 4 fitted models.  The models are then used to obtain predicted responses on the held-out fold.  Standard errors were computed via the bootstrap as described in Section~\ref{sec:nonparametric-varest}, using 50 bootstrap replications.

We compare to the difference-in-means estimator, the H\'ajek estimator using a threshold of $q = 0.75$ on the $X_i^{\text{frac}}$ variable, and the OLS adjustment.   The results are displayed in Table~\ref{table:nonlinear-results}. The DM estimator exhibits the most bias, as it does not adjust for any sort of interference.  The H\'ajek estimator removes some bias, but because it is based on a fractional exposure model it is unable to respond to the effect of having a high treated degree.  Both the OLS and GAM estimators remove about 95\% of the bias.  The GAM adjustment does only slightly better than OLS; for this setup what matters most is adjusting for both axes of the interference \hl{statistic}, and the added flexibility provided by the GAM does not seem to be crucial.  We note also that average bootstrapped standard error is 1.076 times greater than the true standard error, suggesting that confidence intervals built on this standard error will have the approximately correct length. 

% latex table generated in R 3.4.4 by xtable 1.8-2 package
% Sun Jun 24 11:02:22 2018
\begin{table}[ht]
\centering
\begin{tabular}{lrrr}
  \toprule
estimator & estimate & absolute bias (\%) & SE (ratio) \\ 
  \midrule
DM & -0.002 & 6.339 (100\%) & 0.077 (---)\\ 
  H\'ajek & 2.653 & 3.683 (58.1\%) & 1.601 (---) \\ 
  OLS & 6.683 & 0.347 (5.5\%) & 0.252 (0.942) \\ 
  GAM & 6.655 & 0.319 (5.0\%) & 0.246 (1.076) \\ 
   \bottomrule
\end{tabular}
\caption{Nonlinear simulation results.  The bias column displays the absolute and relative bias from the truth $\tau = 6.336$.  The SE column displays the true standard error over 200 simulation replications, and for the adjustment estimators we display in parentheses the ratio of the estimated standard error to the true standard error.}
\label{table:nonlinear-results}
\end{table}

\section{Application to a farmer's insurance experiment}
\label{sec:applications}

In this section we apply our methods to a field experiment conducted on individuals in 185 villages in rural China~\citep{cai2015social}.  The purpose of the study was to quantify the network (spillover) effects of certain information sessions for a farmer's weather insurance product on the eventual adoption of that product.  Though they do not frame their approach explicitly in the language of exposure models as in~\citep{aronow2017estimating}, the estimands that are implied by the regression coefficients in the models that they use in that paper can be thought of as contrasts between exposures in an appropriately-defined exposure model.  The authors did not consider estimating the global treatment effect; our proposed methods essentially allow us to perform an off-policy analysis of that estimand.

In the original field experiment, the researchers consider four treatment groups obtained by assigning villagers to either a simple or intensive information session in one of two rounds that were held three days apart.  Here, for simplicity, we ignore the temporal distinction between the two rounds and consider a villager to be treated if they were exposed to either of the two intensive sessions.\footnote{According to~\citet{cai2015social} the treatment groups in the study are stratified by household size and farm size, but it is not clear from the data if and how exactly this was done, so for simplicity we analyze the experiment as if it were an unstratified, Bernoulli randomized experiment.}  The outcome variable is a binary indicator for whether the villager decided to purchase weather insurance.

We drop all villagers that were missing information about the treatment or the response, as well as villages lacking network information.  Though the study was conducted in separate villages (for the purpose of administering the insurance information sessions), we combine all of the villagers into one large graph $G$.  The network has 4,382 nodes and 17,069 edges.  Because some social connections exist across villages, the villages do not partition exactly into separate connected components; our graph $G$ has 36 connected components.  The summary statistics for the processed dataset are given in Table~\ref{table:cai-summary}.

\begin{table}[ht]
\centering
\begin{tabular}{r|r}
\toprule 
number of nodes & 4832 \\
number of edges & 17069 \\
number (\%) treated & 2406 (49.8\%) \\
average takeup (mean response) & 44.6\% \\
\bottomrule
\end{tabular}
\caption{Summary statistics for the \citep{cai2015social} dataset.}
\label{table:cai-summary}
\end{table}

Now let $\cN_i$ and $\cN_i^{(2)}$ be the one- and two-step neighborhoods for unit $i$, as we have denoted previously.  We construct four variables from the graph: the fraction of units in $\cN_i$ who are treated (\texttt{frac1}), the fraction of units in $\cN_i^{(2)}$ who are treated (\texttt{frac2}), the number of units in $\cN_i$ who are treated (\texttt{num1}), and the number of units in $\cN_i^{(2)}$ who are treated (\texttt{num2}).  Figure~\ref{fig:cai-scatter} displays the scatterplot matrix for these four variables as well as the response.  As might be expected, these four variables are positively correlated with each other, and each is (weakly) positively correlated with the response variable.  This correlation with the response suggests that these variables may be useful for adjustment.
\begin{figure}[!ht]
\centering
\includegraphics[width=\textwidth]{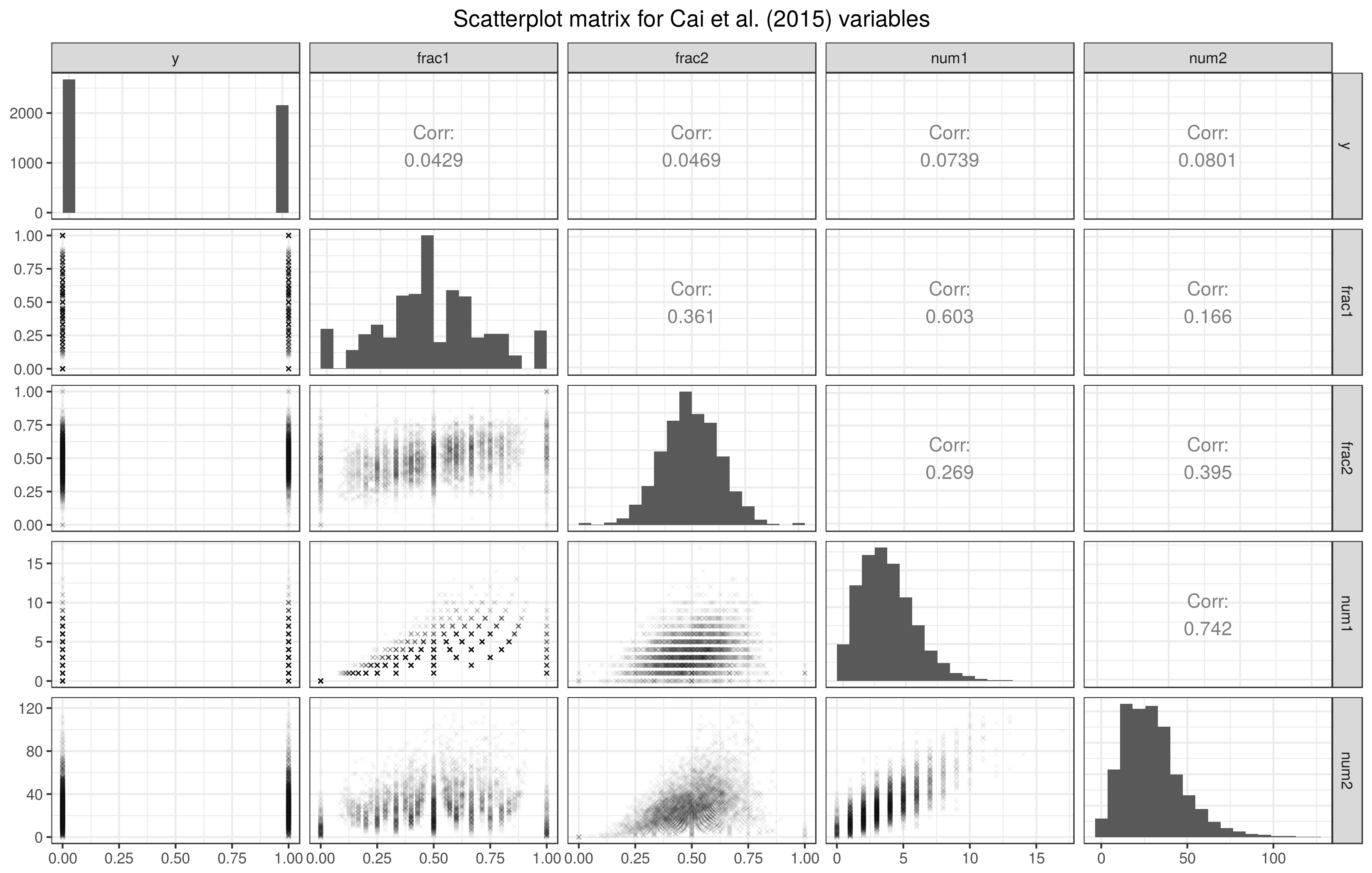}
\caption{Scatterplot matrix for the variables used in the~\citet{cai2015social} analysis.}
\label{fig:cai-scatter}
\end{figure}

% For exploratory purposes, we display the coefficients of an ordinary logistic regression, fit to the entire dataset, which produces estimates as follows:
% \[\widehat{\on{log-odds}}(Y | X, W) = -0.467 + 0.321 W + 0.118 \times\texttt{frac1} - 0.524 \times\texttt{frac2} + 0.0120\times \texttt{num1} + 0.006 \times\texttt{num2}.\]
% Of course, the standard errors and $p$-values for these coefficients from a default statistical package such as \texttt{glm} in \texttt{R} \emph{cannot} be interpreted.

We compute the OLS adjusted estimator as well as an adjustment estimator that used predictions from a logistic regression with $K = 5$ folds.  We construct standard errors using the variance estimator given by equation~\eqref{eqn:ols-varest} in the OLS case, and the parametric bootstrap variance estimator described in Section~\ref{sec:nonparametric-varest} with 200 bootstrap replications for the logistic regression case.  We compare to the difference-in-means estimator and H\'ajek estimators based on thresholding on the \texttt{frac1} and \texttt{frac2} variables with $q = 0.75$.  

The estimates are displayed in Table~\ref{table:cai-results}.  Considering the strong positive spillover effects discovered by~\citet{cai2015social}, the difference-in-means estimate of 0.0774 is likely to be an underestimate of the true global treatment effect.  The H\'ajek estimators produce estimates of 0.1630 (one-step fractional NTR) and 0.1672 (two-step fractional NTR).  Though we do not know the truth, it may make us nervous that these estimates are more than twice the magnitude of the difference-in-means estimator, which if true would suggest that magnitude of the spillover effect is larger than the magnitude of the direct effect.  The true treatment effect likely falls in between the estimates produced by difference-in-means and H\'ajek (though we have no way of knowing for sure). The OLS (0.1218) and logistic regression estimates (0.1197) are similar to each other and both within this range; an advantage they have over the H\'ajek estimators is that they incorporate information about the raw number of treated neighbors.  The standard error estimates of 0.0561 (linear regression adjustment) and 0.0559 (logistic regression adjustment) are quite wide, suggesting some caution when interpreting this result.  %\note{I'd argue that you should compute the SE estimates for the frac1 and frac2-based Hajek estimates, the conservative ones from Aronow and Samii. It feels like a concrete advantage of your work here that you can get a clear estimate of the SE, if I understood it correctly (have already forgotten some details since last week).}

Note that we have omitted computation of standard error estimates for the difference-in-means and H\'ajek estimators for several reasons.  SUTVA and the neighborhood exposure conditions both likely fail to hold, so it is unclear how we should interpret such standard errors.  Secondly, the conservative variance estimators proposed for the H\'ajek estimator~\citep[cf.\ Sections 5, 7.2,][]{aronow2017estimating} are themselves inverse propensity estimators relying on small propensities, and consequently we found them to be quite unstable.  For example, the variance estimate was often much greater than 1, which is the maximum possible variance of a $[-1,1]$-valued random variable.  Of course, we also do not know if the exogeneity assumptions hold or if other variables should be included.  In the regression analyses conducted by~\citet{cai2015social}, they also consider some other social network measures including indicator variables for varying numbers of friends and differentiation between strong and weak ties; a more sophisticated analysis here could include these features as well.

% latex table generated in R 3.4.4 by xtable 1.8-2 package
% Thu Jun  7 00:13:23 2018
\begin{table}[ht]
\centering
\begin{tabular}{lrr}
  \hline
estimator & estimate & standard error \\ 
  \hline
DM & 0.0774 & --- \\ 
  H\'ajek 1 ($q = 0.75$) & 0.1630 & --- \\ 
  H\'ajek 2 ($q = 0.75$) & 0.1672 & --- \\
  Linear & 0.1218 & 0.0561 \\ 
  Logistic (5-fold) & 0.1197 & 0.0559 \\ 
   \hline
\end{tabular}
\caption{Estimates and standard errors for estimating the global treatment effect of intensive session on insurance adoption.}
\label{table:cai-results} 
\end{table}

\section{Discussion}
\label{sec:discussion}
We propose regression adjustments for interference in randomized experiments, which opens the world of the rich regression adjustment literature to the interference setting.  We show in simulation experiments that the adjustments can do well, and we show how to do inference under exogeneity/unconfoundedness assumptions.  Our reanalysis of the \citet{cai2015social} study shows that our approach can produce sensible estimates of the global treatment effect on real data.

There is much work to do to ensure that this approach can be reliably used in practical settings.  First, we would like to extend the methods to handle more complicated designs.  In reality a combination of design-side methods (graph clustering) and analysis-side methods (adjustment) could be the most effective approach.  \hl{It would also be useful to have a thorough understanding of the combinations of network structures and experimental designs that correspond to the mathematical assumptions (exogeneity, full-rank design) listed in this paper.}

\hl{Secondly, it is necessary to formalize the placement of the methods discussed here within the agnostic perspective to treatment effect estimation.  This would clarify the exogeneity/unconfoundedness requirement and better elucidate how interference causes a randomized experiment to behave in some ways like an observational study.  However, such assumptions are not new, and also needed to employ both standard estimators for observational studies in the SUTVA setting and exposure modeling estimators in the interference setting.}

This issue simply highlights the need for better methods that can detect interference; there are several budding possibilities here.  First, several works have proposed ways of doing sensitivity analysis for interference.  \citet{vanderweele2014interference} extend \citet{robins2000sensitivity}-style sensitivity analysis to cover some of the interference estimators studied in~\citet{hudgens2008toward}, and ~\citet{egami2017unbiased} propose using an auxiliary network to perform sensitivity analysis on estimates obtained using the primary network.  But clearly more work in this area is needed.  Second, hypothesis tests for network or spillover effects of the type developed in~\citep{aronow2012general,athey2017exact,basse2017exact}, could be informative if applied to the residuals of a fitted interference model.  Finally, one can always use more robust standard error constructions such as Eicker-Huber-White~\citep{eicker1967limit,huber1967behavior,white1980heteroskedasticity} standard errors for heteroscedasticity or cluster bootstrap methods for dependence, though if the network structure is such that graph cluster randomization is unlikely to work well, then clustered bootstrap probably won't work well either.  
It is also possible that work based on dependency central limit theorems like the ones considered in~\citet{chin2018central} could be used to develop more robust variance calculations.
Broadly, any of the above methods ideas can be applied to the residuals of an interference model.  If the bulk of interference can be captured in the mean function, then it is perhaps easier to deal with the remaining interference in the residuals.

\pagebreak

% \bibliography{bibliography}
% \bibliographystyle{apalike} % plain, ieeetr, apalike

\pagebreak
\appendix

\section{Proofs for Section~\ref{sec:linear}}

\subsection{Proof of Proposition~\ref{prop:linear-unbiased}}
\begin{proof}
Let $\ep_w$ be the $N_w$ vector of $w$-group residuals.  As $y_w = X_w \beta_w + \ep_w$, for $w = 0, 1$, conditionally on $X_w$ being full rank we have
\[\E[\hat \beta_w] = \E[(X_w^\top X_w)^{-1}X_w^\top y_w] = \E[(X_w^\top X_w)^{-1} X_w^\top (X_w \beta_w + \ep_w)] = \beta_w + \E[(X_w^\top X_w)^{-1} X_w^\top \ep_w].\]
Assumption~\ref{asm:errors}(\ref{asm:errors-exogenous}) ensures that the second term is zero, and thus $\hat \beta_w$ is unbiased for $\beta_w$.  

Unbiasedness of $\hat \tau$ then follows by linearity of expectation.
\end{proof}

\subsection{Proof of Theorem~\ref{thm:ols-variance-finite}}
\begin{proof}
We first calculate the variance of $\hat \beta_w$.  By the law of total variance, we have
\begin{align*}
\var(\hat \beta_w) &= \var[(X_w^\top  X_w)^{-1}  X_w^\top y_w] \\
&= \var[( X_w^\top X_w)^{-1}  X_w^\top \ep_w] \\
&= \E[(X_w^\top  X_w)^{-1} X_w^\top \var(\ep_w | X_w) X_w( X_w^\top  X_w)^{-1}] + \var[( X_w^\top X_w)^{-1}  X_w^\top \E(\ep_w | X_w)].
\end{align*}
The second term is equal to zero by Assumption~\ref{asm:errors}(\ref{asm:errors-exogenous}), and so by Assumption~\ref{asm:errors}(\ref{asm:errors-independent}) and~(\ref{asm:errors-homoscedastic}),
\[\var(\hat \beta_w) = \sigma^2\E[(X_w^\top X_w)^{-1}].\]

The coefficient estimates of the two groups are uncorrelated because the residuals are uncorrelated.  That is,
\begin{align*}
\cov(\hat \beta_0, \hat \beta_1) &= \E[\cov(\hat \beta_0, \hat \beta_1 | X)] + \cov(\E[\hat \beta_0 | X], \E[ \hat \beta_1 | X]) \\
&= \E[\cov((X_0^\top X_0)^{-1} X_0^\top \ep_0, (X_1^\top X_1)^{-1} X_1^\top \ep_1)] + 0 \\
&= 0.
\end{align*}

Therefore, 
\begin{align*}
\var(\hat \tau) &= \var((\omega_1)^\top \hat \beta_1 - (\omega_0)^\top \hat \beta_0) \\
&= \sigma^2\left((\omega_0)^\top \E[(X_0^\top X_0)^{-1}] \omega_0 + (\omega_1)^\top \E[(X_1^\top X_1)^{-1}] \omega_1\right),
\end{align*}
which produces the variance expression in equation~\eqref{eqn:ols-variance-finite}.
\end{proof}

\subsection{Proof of Theorem~\ref{thm:ols-clt}}
This lemma establishes some basic convergence results.
\begin{lemma}
\label{lem:convergences}
Let $\bar X_w$ and $S_w$ denote the within-group sample means and covariances.
Under Assumptions~\ref{asm:bernoulli-design}, \ref{asm:x-indep-w}, and the assumptions in the statement of Theorem~\ref{thm:ols-clt}, the following statements hold for $w = 0, 1$.
\begin{enumerate}[(a)]
\item $\bar X_w \pto \mu_X$. 
\item $S_w \pto \Sigma_X$.
\item $\hat \eta_w \pto \eta_w$.
\item $\sqrt{n\pi}(\bar X_1 - \mu_X) \wto N(0, \Sigma_X)$ and $\sqrt{n(1 - \pi)}(\bar X_0 - \mu_X) \wto N(0, \Sigma_X)$.
\item $\sqrt{n\pi}(\hat \eta_1 - \eta_1) \wto N(0, \sigma^2\Sigma_X^{-1})$ and $\sqrt{n(1 - \pi)}(\hat \eta_0 - \eta_0) \wto N(0, \sigma^2 \Sigma_X^{-1})$.
\item $\sqrt{n}(\bar \ep_1 - \bar \ep_0) \wto N\left(0, \frac{\sigma^2}{\pi(1 - \pi)}\right)$.
\end{enumerate}
\end{lemma}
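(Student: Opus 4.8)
The plan is to prove (a)--(c) by laws of large numbers together with Slutsky's theorem and the continuous mapping theorem, and (d)--(f) by central limit theorems combined with Slutsky. The unifying device is to condition so as to expose sums of (conditionally) independent summands: one randomness source is the Bernoulli assignment $W_i$, the other is the exogenous error $\ep_i$, and classical limit theory applies to each once the moment hypotheses are in force. For (a) I would write $\bar X_1 = (N_1/n)^{-1}\avgin W_i X_i$, with $N_1/n \pto \pi$ by the LLN for $W_i \iid \ber(\pi)$. Decompose $\avgin W_i X_i = \pi \bar X + \avgin (W_i - \pi)X_i$; the first term converges to $\pi\mu_X$ by the assumed $\bar X \pto \mu_X$, while the second is mean zero (since $W_i \indep X_i$ makes each summand centered) and negligible in probability under the bounded fourth moments via Chebyshev. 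Dividing gives $\bar X_1 \pto \mu_X$. Running the same argument with $X_iX_i^\top$ in place of $X_i$ yields $\frac{1}{N_1}\sum W_i X_iX_i^\top \pto \Sigma_X + \mu_X\mu_X^\top$, so that $S_1 \pto \Sigma_X$ after subtracting $\bar X_1\bar X_1^\top$; this proves (b). The $w=0$ cases are symmetric.

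For (c), decompose $\hat\eta_w = S_w^{-1}\,\frac{1}{N_w}\sum_{i:W_i=w}(X_i - \bar X_w)(Y_i - \bar y_w)$ and substitute $Y_i = \alpha_w + \eta_w^\top X_i + \ep_i$ within group $w$. The empirical cross-covariance splits into a signal part $S_w\eta_w$ and a noise part $\frac{1}{N_w}\sum(X_i-\bar X_w)\ep_i$. The noise part has conditional mean zero by exogeneity (Assumption~\ref{asm:errors}(\ref{asm:errors-exogenous})) and vanishes in probability under the moment bounds, while $S_w^{-1}\pto \Sigma_X^{-1}$ by part (b), positive-definiteness of $\Sigma_X$, and the continuous mapping theorem. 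Slutsky then gives $\hat\eta_w \pto \eta_w$.

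The limit statements (d)--(f) I would obtain by conditioning on $(\mathbf X, \mathbf W)$ to isolate independent summands. For (f), the residual means $\bar\ep_1$ and $\bar\ep_0$ are built from disjoint index sets, so the $\ep_i$ entering them are independent; a Lyapunov CLT (bounded fourth moments, common variance $\sigma^2$) gives asymptotic normality of each, $N_w/n \pto \pi_w$ fixes the scaling, and independence across groups yields limiting variance $\sigma^2(\pi^{-1}+(1-\pi)^{-1}) = \sigma^2/(\pi(1-\pi))$, as claimed. For (e), write $\sqrt{N_w}(\hat\eta_w - \eta_w) = S_w^{-1}\cdot \frac{1}{\sqrt{N_w}}\sum_{i:W_i=w}(X_i-\bar X_w)\ep_i$; conditionally on $(\mathbf X,\mathbf W)$ the summands are independent and mean zero with conditional covariance $\sigma^2 S_w \pto \sigma^2\Sigma_X$, so a Lindeberg--Feller CLT (Lindeberg condition supplied by the bounded fourth moments) gives the score $\wto N(0,\sigma^2\Sigma_X)$, and Slutsky with $S_w^{-1}\pto\Sigma_X^{-1}$ produces $N(0,\sigma^2\Sigma_X^{-1})$. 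Statement (d) is the analogue for $\bar X_w$, requiring $\frac{1}{\sqrt{N_w}}\sum_{i:W_i=w}(X_i - \mu_X) \wto N(0,\Sigma_X)$.

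I expect (d) to be the main obstacle. Unlike (e) and (f), its randomness is not the independent errors but the features themselves, and the $X_i$ are deterministic functions of the indirect treatments $\Wni$ and hence dependent across units; a naive i.i.d.\ CLT does not apply, and bounded moments alone do not control the cross terms $\cov((W_i-\pi)X_i,(W_j-\pi)X_j)$ once $X_i$ depends on $W_j$. The clean route is to lean on the asymptotic regime posited in Theorem~\ref{thm:ols-clt}: treating the empirical feature moments as converging ($\bar X \pto \mu_X$, $S\pto\Sigma_X$) and the selected-subsample sum as amenable to a Lindeberg--Feller CLT after conditioning on the design, so that the network-induced dependence is absorbed into the stated convergence hypotheses rather than derived from scratch. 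I would flag explicitly that verifying these hypotheses for a given network and feature map is precisely the scaling question the paper notes may fail, and is not attempted here.
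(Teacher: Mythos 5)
Your overall architecture matches the paper's: consistency claims via moment/Chebyshev arguments, and the weak limits in (e), (f) obtained by conditioning on the features and design so that the independent errors $\ep_i$ supply the CLT, with Slutsky and $S_w \pto \Sigma_X$ finishing the job. Your treatment of (c), (e), (f) is sound, and in fact more explicit than the paper's, which merely asserts asymptotic Gaussianity in (e); conditioning is legitimate there precisely because Assumption~\ref{asm:errors}(\ref{asm:errors-independent}) makes the $\ep_i$ independent. Your handling of (d) --- absorbing it into the assumed asymptotic regime and flagging that it is not derived --- is also effectively what the paper does, since its entire proof of (d) is the sentence that the result ``follows from Bernoulli sampling and the convergence of the finite population means.''

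The genuine gap is that you mislocate the difficulty: the cross-term problem you correctly diagnose for (d) equally invalidates your Chebyshev step in (a) and (b). You claim $\avgin (W_i - \pi)X_i$ is negligible ``under the bounded fourth moments via Chebyshev,'' but its variance is $n^{-2}\sum_{i,j} \E[(W_i-\pi)(W_j-\pi)X_i^\top X_j]$, and the $n(n-1)$ off-diagonal terms need not vanish when $X_i$ depends on $\Wni$. Concretely, take $\pi = 1/2$ and the parity features $X_i = \prod_{j \neq i}(2W_j - 1)$: Assumptions~\ref{asm:bernoulli-design} and~\ref{asm:x-indep-w} hold, $\bar X \pto 0$, $S \pto 1$, and all moments are bounded, yet writing $P = \prod_{j}(2W_j-1)$ we have $X_i = P(2W_i-1)$, hence $\bar X_1 = \frac{1}{N_1}\sumin W_i X_i = P \in \{\pm 1\}$, which does not converge to $\mu_X = 0$ (and $\sqrt{n\pi}(\bar X_1 - \mu_X)$ diverges, so (d) fails too). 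So (a) and (b) are simply not consequences of the stated hypotheses plus Chebyshev; like (d), they require a regularity condition on how the features depend on the design. (The paper's own proof of (a) embeds the same hidden assumption --- its law-of-total-variance step treats the $W_i$ as i.i.d.\ Bernoulli conditionally on $X$, which is exactly what fails in the example --- so you are not worse off than the paper in substance, but your explicit claim is refutable, and your framing that the obstacle is confined to (d) while (a)--(c) are safe is a misdiagnosis.) The repair, which would bring you to the paper's level of rigor, is to state once that every feature-driven statement --- (a), (b), and (d), as opposed to the error-driven (c), (e), (f) --- is taken to hold as part of the assumed scaling of the design moments in the sequence-of-populations regime, rather than derived from bounded moments alone.
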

\begin{proof}{}
\begin{enumerate}[(a)]
\item Because of Bernoulli random sampling it holds that
\[\limn \E[\bar X_1] = \limn \E\left[\frac{1}{N_1}\sumin W_i X_i\right] = \mu_X.\]

By conditioning on $X$ we have
\[\var(\bar X_1) = \E[\var (\bar X_1 | X)] + \var[\E(\bar X_1 | X)].\]
For the first term, we have
\[\E[\var(\bar X_1 | X)] = \E\left[\var\left(\frac{1}{n\pi} \sumin W_i X_i + r_n\right)\right],\]
where
\[\var\left(\frac{1}{n\pi} \sumin W_i X_i\right) = \frac{1 - \pi}{n^2\pi}  \sumin X_i^2 = O_p(n^{-1})\]
and
\[r_n = \left(\frac{1}{N_1} - \frac{1}{np} \right)\sumin W_i X_i = O_p(n^{-1})\]
since $N_1/n \to \pi$ in probability.
For the second term, we have
\[\var[\E(\bar X_1 | X] = \var(\bar X ) \to 0\]
since $\bar X - \mu_X = o_p(1)$.
Therefore, we conclude $\var (\bar X_1) \to 0$, and so consistency follows from Chebychev's inequality. 

The result similarly holds for $\bar X_0$.
\item This result is established in a similar manner to part (a), using the fact that
\[\avgin (X_i - \bar X)^\top (X_i - \bar X) \pto \Sigma_X,\]
and the fact that fourth moments are bounded.
\item The convergence of $\hat \eta_w$ to $\eta_w$ follows conditionally on $X$ from standard OLS theory.  Then, letting
\[S_w = \frac{1}{n}(X_w - \bar X_w)^\top (X_w - \bar X_w)\]
denote the sample covariance matrix,
we find
\begin{align*}
\var(\hat \eta) &= \var[\E[\hat \eta_w | X]] + \E[\var[\hat \eta_w | X]] \\
&= \var[\eta_w] + \frac{\sigma^2}{n}\E[S_w^{-1}] \to 0.
\end{align*}
Convergence in probability follows from Chebychev's inequality.
% Since the design matrix as we have written it omits a column of ones, the OLS estimate of $\eta_w$ is given by
% \[\hat \eta_w = S_w^{-1} \frac{1}{n}(X_w - \bar X_w)^\top (y_w - \bar y_w),\]
% where 
% \[S_w = \frac{1}{n}(X_w - \bar X_w)^\top (X_w - \bar X_w).\]
% is the sample covariance.
% Unbiasedness follows from
% \begin{align*}
% \E[\hat \eta_w] &= \frac{1}{n}\E[\E[S_w^{-1} (X_w - \bar X_w)^\top (y_w - \bar y_w) | X]] \\
% &= \frac{1}{n}\E[\E[S_w^{-1} (X_w - \bar X_w)^\top ((X_w - \bar X_w)\eta_w + (\ep_w - \bar \ep_w)) | x]] \\
% &= \E[S_w^{-1} S_w \eta_w | X] = \eta_w.
% \end{align*}
% Then \begin{align*}
% \var(\hat \eta_w) &= \E[\var(\hat \eta_w | X)] + \var(\E[\hat \eta_w | X]) \\
% &= \E[S_w^{-1} \frac{\sigma^2}{N_w}] + \var(\eta_w) \\
% &= 
% \end{align*}
% \[\var(\hat \beta) = (X'X)^{-1} \var(X'\ep \ep^\top X') (X'X)^{-1} = \sigma^2(X'X)^{-1}\]
% \[\hat \eta_w - \eta_w = S_w^{-1}(X_w - \bar X_w)^\top (\ep_w - \bar \ep_w)\]
\item This result follows from Bernoulli sampling and the convergence of the finite population means, $\bar X \pto \mu_X$.%\note{TODO} 
%\[\sqrt{n \pi}(\bar X_1 - \mu_X) = \sqrt{n\pi}(\bar X_1 - \bar X + \bar X - \mu_X)\]
\item As in the proof of part (c), we write
\[\hat \eta_w = \frac{1}{n}S_w^{-1}(X_w - \bar X_w)^\top (y_w - \bar y_w).\]
Since $y_w = X_w\eta_w + \ep_w$, we can write
\begin{align*}
\sqrt{n}(\hat \eta_w - \eta_w) &= \sqrt{n}\left[\frac{1}{n}S_w^{-1} (X_w - \bar X_w)^\top (y_w - \bar y_w)  - \eta_w\right]\\
&= \frac{1}{\sqrt{n}}S_w^{-1} (X_w - \bar X_w)^\top (\ep_w - \bar \ep_w) \\
&= \frac{1}{\sqrt{n}}\Sigma_X^{-1} (X_w - \bar X_w)^\top (\ep_w - \bar \ep_w) + R,
\end{align*}
where the remainder is
\[R = \frac{1}{\sqrt{n}}(S_w^{-1} - \Sigma_X^{-1}) (X_w - \bar X_w)^\top (\ep_w - \bar \ep_w)\]
Since $S_w^{-1}- \Sigma_X^{-1} = o_p(1)$ is implied by $S_w \pto\Sigma_X$, and $\sqrt{n}(X_w - \bar X_w)^\top = O_p(1)$ and $\sqrt{n}(\ep_w - \bar \ep_w) = O_p(1)$, the remainder satisfies $R = o_p(1)$.

Then $\sqrt{n}(\hat \eta_w - \eta_w)$ is asymptotically Gaussian with mean zero and variance
\[\limn \var\left(\frac{1}{\sqrt{n}}\Sigma_X^{-1}(X_w - \bar X_w)^\top (\ep_w - \bar \ep_w)\right) = \sigma^2 \Sigma_X^{-1} \limn \var(X_w) \Sigma_X^{-1}.\]
Using the result of part (d), this variance equals $\frac{\sigma^2}{\pi} \Sigma_X^{-1} \Sigma_X \Sigma_X^{-1} = \frac{\sigma^2}{\pi} \Sigma_X^{-1}$ when $w = 1$ and $\frac{\sigma^2}{1 - \pi}\Sigma_X^{-1}$ when $w = 0$.
\item From Assumption~\ref{asm:errors}, $\bar \ep_1$ is independent of $\bar \ep_0$ with variances $\sigma^2 / (n \pi)$ and $\sigma^2 / (n(1 - \pi))$, respectively.  A standard central limit theorem shows that $\sqrt{n} (\bar \ep_1 - \bar \ep_0)$ is asymptotically Gaussian with mean $0$ and variance
\[\frac{\sigma^2}{\pi} + \frac{\sigma^2}{1 - \pi} = \frac{\sigma^2}{\pi(1 - \pi)}.\]
\end{enumerate}
\end{proof}

We now prove the main theorem.
 
\begin{proof}
We characterize the treatment effect estimator as
\begin{align*}
\hat \tau - \tau &= \bar y_1 - \bar y_0 + (\omega_1 - \bar X_1)^\top \hat \eta_1 - (\omega_0 - \bar X_0)^\top \hat \eta_0 - (\alpha_1 - \alpha_0) - (\omega_1^\top \eta_1 - \omega_0^\top \eta_0) \\
&= \bar \ep_1 - \bar \ep_0 + (\omega_1 - \bar X_1)^\top (\hat \eta_1 - \eta_1) -  (\omega_0 - \bar X_0)^\top (\hat \eta_0 - \eta_0),  
\end{align*}
which implies that
\begin{align*}
\sqrt{n}(\hat\tau - \tau)&= \sqrt{n}(\bar \ep_1 - \bar \ep_0) + \sqrt{n}(\omega_1 - \bar X_1)^\top (\hat \eta_1 - \eta_1) - \sqrt{n}(\omega_0 - \bar X_0)^\top (\hat \eta_0 - \eta_0).
\end{align*}
Now,
\[\sqrt{n}(\omega_w - \bar X_w)^\top (\hat \eta_w - \eta_w) = \sqrt{n}(\omega_w - \mu_w)^\top (\hat \eta_w - \eta_w) + \sqrt{n}(\mu_w - \bar X_w)^\top (\hat \eta_w - \eta_w),\]
for $w = 0, 1$, where the second term is $o_p(1)$ since $\bar X_w \pto \mu_X$ and $\hat \eta_w \pto \eta_w$ following from parts (a) and (c) of Lemma~\ref{lem:convergences}.  Therefore,
\[\sqrt{n}(\hat \tau - \tau) = \sqrt{n}(\bar \ep_1 - \bar \ep_0) + \sqrt{n}(\omega_1 - \mu_X)^\top (\hat \eta_1 - \eta_1) - \sqrt{n}(\omega_0 - \mu_X)^\top (\hat \eta_0 - \eta_0) + o_p(1).
\]

The three terms are uncorrelated, with
\begin{align*}
\sqrt{n}(\bar \ep_1 - \bar \ep_0)& \wto N\left(0, \frac{\sigma^2}{\pi(1 - \pi)}\right) \\
\sqrt{n}(\omega_1 - \mu_X)^\top (\hat \eta_1 - \eta_1) &\wto N\left(0, \frac{\sigma^2}{\pi}\|\omega_1 - \mu\|_{\Sigma_X^{-1}}^2\right) \\
\sqrt{n}(\omega_0 - \mu_X)^\top (\hat \eta_0 - \eta_0) &\wto N\left(0, \frac{\sigma^2}{1 - \pi} \|\omega_0 - \mu\|_{\Sigma_X^{-1}}^2\right),
\end{align*}
established in parts (e) and (f) of Lemma~\ref{lem:convergences}.
Combining the terms produces the variance expression in equation~\eqref{eqn:asymp-var}, and completes the proof.
\end{proof}

\subsection{Proof of Corollary~\ref{cor:ols-fixed}}
\begin{proof}
If $X_i$ is independent of $\Wni$, then 
\[\omega_0 = \avgin \E[X_i | \Wni = \+0] = \avgin \E[X_i],\]
and so is equal to $\mu_X$ in the limit (with the understanding that $\omega_0$ is actually a sequence associated with each finite population).  The same holds true for $\omega_1$.  Then the result follows immediately from equation~\eqref{eqn:asymp-var}, as the second and third terms are equal to zero.
\end{proof}

%\section{Proofs for Section~\ref{sec:nonparametric}}

\end{document}